\newtheorem{theorem}{Theorem}
\newtheorem{corollary}{Corollary}
\newtheorem{definition}{Definition}
\newtheorem{example}{Example}
\newtheorem{lemma}{Lemma}
\newtheorem{proposition}{Proposition}
\newtheorem{remark}{Remark}
\newenvironment{proof}[1][Proof]{\emph{#1.} }{\  \hfill $\square $ \vspace{5 pt}}
\newcommand*\samethanks[1][\value{footnote}]{\footnotemark[#1]}
\begin{document}

\title{Counting steps for re-stabilization in a labor matching market\thanks{
We acknowledge the financial support of UNSL through grants 03-2016 and 03-1323, and from Consejo Nacional
de Investigaciones Cient\'{\i}ficas y T\'{e}cnicas (CONICET) through grant
PIP 112-200801-00655, and from Agencia Nacional de Promoción Cient\'ifica y Tecnológica through grant PICT 2017-2355.}}


\author{Agustín G. Bonifacio\thanks{Instituto de Matem\'{a}tica Aplicada San Luis (UNSL-CONICET) and Departamento de Matemática, Universidad Nacional de San
Luis, San Luis, Argentina. Emails: \texttt{abonifacio@unsl.edu.ar} (A. G. Bonifacio), \texttt{ncguinazu@unsl.edu.ar} (N. Gui\~{n}azu), \texttt{nmjuarez@unsl.edu.ar} (N. Juarez), \texttt{paneme@unsl.edu.ar} (P. Neme) and \texttt{joviedo@unsl.edu.ar} (J. Oviedo).} \and  Nadia Gui\~{n}az\'u \samethanks[2] \and Noelia Juarez\samethanks[2] \and Pablo Neme\samethanks[2] \and Jorge Oviedo\samethanks[2]}

\date{\today}
\maketitle

\begin{abstract}

We study a one-to-one labor matching market. If a worker considers resigning from her current job to obtain a better one, how long does it take for this worker to actually get it? We present an algorithm that models this situation as a re-stabilization process involving a vacancy chain. Each step of the algorithm is a link of such a chain. We show that the length of this vacancy chain, which can be interpreted as the time the worker has to wait for her new job, is intimately connected with the lattice structure of the set of stable matchings of the market. Namely, this length can be computed by considering the cardinalities of cycles in preferences derived from the initial and final stable matchings involved.


\bigskip

\noindent \emph{JEL classification:} C78, D47.\bigskip

\noindent \emph{Keywords:}  Labor markets, stable matchings, re-stabilizing process, cycles in preferences.

\end{abstract}

 \section{Introduction}
\label{intro}

Since the seminal work of \cite{blum1997vacancy}, a 
large  literature has emerged studying re-stabilization processes in two-sided matching markets. Most of this literature has focused on disequilibrium situations triggered by the retirement of senior-level workers or the entry of new firms.  This perturbation of equilibrium unrolls a vacancy adjustment dynamic that leads to a new stable situation. However, some other sources of disequilibrium, and their consequences for the market functioning, have remained unexplored.


In this paper, we present the following problem. Assume that the market clears in a stable matching other than the worker-optimal. This implies room for improvement from the workers' point of view. If a worker considers resigning from her current job to obtain a better one, how long does it take for this worker to actually get it?
Note that this worker's resignation causes the firm that she left to make an offer to another worker, which may generate another vacant position in a different firm and thus trigger a vacancy chain. This vacancy chain continues until a new stable matching is eventually reached in which the worker that first resigned obtains a new (and better) job.
The length of this vacancy chain can be interpreted as the time the worker has to wait for her new job. Since she will be unemployed during this adjustment process, knowing its length will allow her to decide whether to resign and wait for the new job or to remain in her \emph{status quo}.

 To solve this problem, we present an algorithm that models the aforementioned vacancy chain process. At each stage of this algorithm, a worker-firm pair is matched, modifying the previously obtained overall matching between workers and firms. We demonstrate that the algorithm terminates in a finite number of steps and yields a new stable matching as output. This solution is intimately connected with the lattice structure of the set of stable matching with respect to the unanimous ordering for workers \citep{knuth1976marriages}. Beginning from a stable matching (different from the worker-optimal) and through a procedure of ``reduction of preferences'', we can define a ‘‘cycle in preferences’’ that allows us to generate a new matching, called a ‘‘cyclic matching’’, that turns out to be stable \citep{irving1986complexity}.\footnote{To the best of our knowledge, \cite{irving1986complexity} is the first paper that introduces the notion of cycles in preferences, under the name of ``rotations''.} A cycle in preferences is a set of worker-firm pairs that carries the information on how to modify the original matching in order to obtain its derived cyclic matching. Moreover, given two stable matchings related under this unanimous ordering, one can find a sequence of stable matchings from one to the other. Each stable matching of the sequence (but the first one) can be obtained from the previous one as a cyclic matching. 
 From this sequence of cyclic matchings, we can extract a sequence of cycles in preferences.

 Our algorithm exploits this lattice structure in the following way. For a sequence of cycles generated from the input and output stable matchings of the algorithm, 
 and when all acceptable worker-firm pairs are stable, we show that: (i) at each stage of the algorithm, the pair that is matched belongs to a cycle of the sequence; (ii) all pairs from the last cycle of the sequence are matched; and (iii) for each remaining cycle, all pairs except one are matched at some stage of the algorithm. This implies that we can compute the number of steps of the algorithm (i.e., the length of the vacancy chain) in terms of the cardinality of the cycles involved: it equals the sum of the cardinality of the last cycle and the cardinality \emph{minus one} of all the remaining cycles.



Besides the aforementioned seminal paper of \cite{blum1997vacancy}, where in a one-to-one model a decentralized process using the ``deferred acceptance algorithm''\footnote{For a one-to-one model, \cite{gale1962college} introduce this algorithm and prove that it always terminates in a stable matching,  thus showing that the stable set is non-empty.} mimics a vacancy chain dynamic that achieves stability, there are other papers that study re-stabilization processes triggered by the retirement of senior-level workers or the creation of new firms. For example, a similar process with an adequate version of the deferred acceptance algorithm can be adapted to a many-to-one setting \citep{cantala2004restabilizing}. 
 Another approach to address this issue 
is based on Tarski's Fixed Point Theorem. In several works, the vacancy chain generated by the market destabilization is modeled by iterating Tarski operators having stable matchings as their fixed points \citep[see][among others]{bonifacio2022lattice,bonifacio2024lattice,kamada2023fair,wu2018lattice}. It is important to highlight that, unlike previous work, we do not assume changes in the size of the population in our market. Instead, we consider a scenario where a worker, without exiting the market, destabilizes it by resigning from her current job, thereby triggering a vacancy chain that results in a new stable matching where this worker is better off.

The notion of cycles in preferences has been widely exploited in the literature, mainly for computing the full set of stable matchings. The first works in this direction (for a one-to-one model) are those of \cite{irving1986complexity} and \cite{gusfield1987three}.\footnote{See also the treatment of cycles in preferences in the classic book by  \cite{roth1992two}.}  For many-to-one and many-to-many models with responsive preferences, the concepts were extended in \cite{cheng2008unified} and \cite{bansal2007polynomial}, respectively.  Finally, for a more general many-to-many model  with substitutable preferences that satisfy the law of aggregate demand, the concept was adapted by 
\cite{bonifacio2022cycles}. 
Furthermore, cycles in preferences have also been exploited in another direction. The set of probabilistic matchings fulfilling a notion of strong stability has been characterized using cycles in preferences and cyclic matchings for a one-to-one model \citep{neme2019characterization}  and for a many-to-one model with responsive preferences \citep{neme2021many}. 
    
The rest of the paper is organized as follows. The model and preliminary results are presented in Section \ref{seccion preliminares}. In Section \ref{seccion algoritmo}, an algorithm for re-stabilizing the market when a worker decides to resign in order to improve her labor situation is presented. The results on how many steps the algorithm takes to reach stability are gathered in Section \ref{seccion contando pasos}. Finally, some final remarks are presented in Section \ref{seccion final remarks}.
All proofs are relegated to the Appendix \ref{apendice pruebas}. 
In Appendix \ref{apendiceb},  we discuss the validity of our results without the assumption that all acceptable pairs are stable. Moreover, we show that the extension to a many-to-one model is not straightforward.

\section{Model and preliminaries}\label{seccion preliminares}
 
 We consider one-to-one matching markets where there are two disjoint sets of agents: the set of \textit{firms} $F$ and the set of \textit{workers} $W$. Each agent $a\in F\cup W$ has a strict preference relation $P_a$ over the agents on the
 other side of the market and the prospect of being unmatched, denoted by $\emptyset$.  For each agent $a\in F\cup W$, $R_a$ is the weak preference associated with $P_a$. Let $P_F$ be the preference profile for all firms, and let $P_W$ be the preference profile for all workers. We denote by $P$ the preference profile for all agents.  Since the sets $F$ and $W$ are kept fixed throughout the paper, we often identify the market $(F,W,P)$ with the preference profile $P$.  Given an agent $a \in F\cup W$, an agent
 $b$ in the opposite side of the market is \textbf{acceptable for  $\boldsymbol{a}$ under $\boldsymbol{P}$} if $bP_a\emptyset$. A pair $(w, f) \in W \times F$ is an \textbf{acceptable pair under $\boldsymbol{P}$ } if $f$ is acceptable for $w$ under $P$ and
 $w$ is acceptable for $f$ under $P$. In this paper, the preference relation $P_a$ is represented  by the ordered list of its acceptable agents (from most to least preferred).\footnote {For instance, $P_f:w_1,w_4,w_2$ indicate that $w_1P_fw_4P_fw_2P_f\emptyset$ and $P_w:f_1,f_2,f_3$ indicate that $f_1P_wf_2P_wf_3P_w\emptyset$.}

A \textbf{matching} $\mu$ is a function from the set $F\cup W$ into $F\cup W\cup 
\left\lbrace \emptyset\right\rbrace $ such that for each $w\in W$ and for each $f\in F$ (i) $\mu(w)\in F\cup 
\left\lbrace \emptyset\right\rbrace$, (ii) $\mu(f)\in W\cup 
\left\lbrace \emptyset\right\rbrace$, and (iii) $w= \mu(f)$ if and only if $f=\mu(w)$.

Agent $a\in F\cup W$ is \textbf{matched} if $\mu(a) \neq \emptyset$, otherwise she is \textbf{unmatched}. For the following definitions, fix a preference profile $P$.
A matching $\mu$ is \textbf{blocked by agent $\boldsymbol{a}$} if $\emptyset P_a \mu(a)$. A matching is  \textbf{individually rational} if it is not blocked by any individual agent. A matching $\mu$ is \textbf{blocked by a worker-firm pair $\boldsymbol{(w,f)}$} if, $f P_w \mu(w),$  and $w P_f \mu(f)$. A matching $\mu$ is \textbf{stable} if it is not blocked by any individual agent or any worker-firm pair. The set of stable matchings  for a preference profile $P$ is denoted by $\boldsymbol{S(P)}.$ 

Given a preference profile $P$ and two  matchings $\mu$ and  $\mu'$,  we write  $\boldsymbol{\mu P_F \mu'}$ whenever $\mu(f) R_f \mu'(f)$ for each $f\in F$, and there is $f'\in F$ such that $\mu(f') P_{f'}\mu'(f')$. We write  $\boldsymbol{\mu R_F \mu'}$ whenever $\mu(f) R_f \mu'(f)$ for each $f\in F$. Similarly, we write $\boldsymbol{\mu P_W\mu'}$ and $\boldsymbol{\mu R_W\mu'}$. Note that $R_F$ and $R_W$, which represent the common preferences of the firms and workers, respectively, are partial orders over the set of matchings.

\citet{knuth1976marriages} established that the stable matchings set has a dual lattice structure with respect to the partial orders $R_F$ and $R_W$. These lattices contain two distinctive matchings: the \textbf{firm-optimal stable matching} with respect to $R_F$, denoted by  $\boldsymbol{\mu_F}$, and the \textbf {worker-optimal stable matching}  with respect to $R_W$, denoted by $\boldsymbol{\mu_W}$. Moreover, matching $\mu_F$ is the \textbf {worker-pessimal stable matching}  with respect to $R_W$, and $\mu_W$ is the \textbf{firm-pessimal stable matching} with respect to $R_F$  \citep[see][for more details]{roth1992two}.

\subsection{The reduction procedure}

We present the preference reduction procedure initially presented by \cite{irving1986complexity}. 
This procedure will allow us to define a cycle in preferences, an essential concept for counting the stages of our algorithm.  
Given a market $P$, let $\mu ,\widetilde{\mu}\in S(P)$ such that $\mu P_F \widetilde{\mu}$. 
The reduction procedure is described as follows.

\begin{enumerate}
\item[\textbf{Step 1:}] 

\begin{enumerate}
[(a)]

\item Remove all $w$ who are more preferred than
$\mu(f)  $ from $f's$ list of acceptable workers. 
\item Remove all $f$ who are more preferred than $\widetilde{\mu}(
w)$ from $w's$ list of acceptable firms.
\end{enumerate}

\item[\textbf{Step 2:}] 

\begin{enumerate}
[(a)]
\item Remove all $f$ who are less preferred than
$\mu(w)$ from $w's$ list of acceptable firms.
\item Remove all $w$ who are less preferred than $\widetilde{\mu}(
f)$ from $f's$ list of acceptable workers.
\end{enumerate}

\item[\textbf{Step 3:}] After steps 1 and 2, if $f$ is not acceptable for $w$ (i.e., if
$f$ is not on $w's$ preference list as now modified), then
remove $w$ from $f's$ list of acceptable workers, and similarly, remove from  $w's$ list of acceptable firms any $f$ to whom $w$ is no longer acceptable.
\end{enumerate}

\noindent The  profile obtained by this procedure  is called the \textbf{reduced preference profile with respect to $\boldsymbol{\mu$ and $\widetilde{\mu}}$}, and is denoted by $\boldsymbol{P^{\mu,\widetilde{\mu}}}$. When $\widetilde{\mu}=\mu_W,$ the  profile is simply called the \textbf{reduced preference profile with respect to $\boldsymbol{\mu}$}, and is denoted by $\boldsymbol{P^{\mu}}$.
\begin{remark}
\label{remark de reduccion M-M} 
Let $P$ be a market and assume $\mu,\widetilde{\mu}\in S(P)$ with $\mu P_F \widetilde{\mu}$. Then, the following statements hold:
\begin{enumerate}[(i)]
\item for each $f \in F$, $\mu(f)$ and $\widetilde{\mu}(f)$ are the first and last entries 
  in $f's$ reduced preference, respectively. Symmetrically, for each $w \in W$, $\widetilde{\mu}(w)$ and $\mu(w)$ are
the first and last entries in $w's$ reduced preference, respectively. 

\item $\mu $ is the firm--optimal stable matching under $P^{\mu,\widetilde{\mu} }$ and $%
\widetilde{\mu}$ is the worker--optimal stable matching under $P^{\mu,\widetilde{\mu} }$.
Furthermore, $\widetilde{\mu}$ is the firm--pessimal stable matching under $P^{\mu,\widetilde{\mu} }$ and $\mu $ is the worker--pessimal stable matching under $P^{\mu,\widetilde{\mu} }
$.

\item $f$ is acceptable to $w$ under 
$P^{\mu,\widetilde{\mu} }$  if and only if $w$ is acceptable to $f$ under 
$P^{\mu,\widetilde{\mu} }.$ 

\end{enumerate}
\end{remark}

The following proposition is taken from \cite{roth1992two}.
\begin{proposition}\label{establePMUsiysolosireducido}
     Let $\mu,\widetilde{\mu}\in S(P)$ with $\mu P_{F}\widetilde{\mu}$. Then, $\mu'\in S(P)$ and  $\mu P_{F} \mu' P_{F}\widetilde{\mu}$ if and only if  $\mu' \in S(P^{\mu,\widetilde{\mu}}).$ 
\end{proposition}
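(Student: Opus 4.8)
The plan is to prove the two implications separately, relying throughout on three facts. First, the three parts of Remark~\ref{remark de reduccion M-M}, which locate $\mu$ and $\widetilde{\mu}$ as the extreme entries of every reduced list and identify them as the firm-optimal/pessimal and worker-optimal/pessimal stable matchings of $P^{\mu,\widetilde{\mu}}$. Second, the ``opposition of interests'' that accompanies the lattice structure of \citet{knuth1976marriages}: for $\mu',\mu''\in S(P)$ one has $\mu' R_F \mu''$ if and only if $\mu'' R_W \mu'$, and the set of matched agents is the same at every stable matching. Third, the elementary but crucial observation that the reduced preference of each agent is obtained from the original one only by deleting entries, so that $P^{\mu,\widetilde{\mu}}$ ranks any two surviving agents exactly as $P$ does, and acceptability under $P^{\mu,\widetilde{\mu}}$ implies acceptability under $P$. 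The forward implication is routine; the reverse implication---showing that a matching stable in the reduced market cannot be blocked in the full market---is the crux.

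For the forward direction, assume $\mu'\in S(P)$ with $\mu R_F \mu' R_F \widetilde{\mu}$. First I would check that $\mu'$ respects the reduced acceptability, i.e. $\mu'(a)$ survives the reduction of $a$'s list for every agent $a$. For a firm $f$, $\mu(f) R_f \mu'(f) R_f \widetilde{\mu}(f)$ places $\mu'(f)$ weakly between the first and last entries of $f$'s reduced list (Remark~\ref{remark de reduccion M-M}(i)), so it is not deleted in Steps 1--2; opposition of interests gives the symmetric bound $\widetilde{\mu}(w) R_w \mu'(w) R_w \mu(w)$ for the matched worker $w=\mu'(f)$, so $f$ is likewise not deleted from $w$'s list in Steps 1--2, whence Step 3 deletes neither and the pair $(w,f)$ is acceptable under $P^{\mu,\widetilde{\mu}}$. (Agents unmatched by $\mu'$ are unmatched at every stable matching, so no inconsistency arises.) It then remains to rule out a blocking pair for $\mu'$ under $P^{\mu,\widetilde{\mu}}$: any such pair would be acceptable under $P^{\mu,\widetilde{\mu}}$, hence under $P$, and ranked the same way by both agents under $P$ as under $P^{\mu,\widetilde{\mu}}$, so it would already block $\mu'$ under $P$, contradicting $\mu'\in S(P)$.

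For the reverse direction, assume $\mu'\in S(P^{\mu,\widetilde{\mu}})$. The ordering $\mu R_F \mu' R_F \widetilde{\mu}$ is immediate: by Remark~\ref{remark de reduccion M-M}(ii), $\mu$ and $\widetilde{\mu}$ are the firm-optimal and firm-pessimal elements of $S(P^{\mu,\widetilde{\mu}})$, and since the reduced ranking agrees with $P$ on surviving entries these weak comparisons transfer to $R_F$. Individual rationality under $P$ is clear because each partner assigned by $\mu'$ is acceptable under $P^{\mu,\widetilde{\mu}}$ and therefore under $P$. The heart of the argument is that $\mu'$ has no blocking pair under $P$. Suppose $(w,f)$ blocks $\mu'$ under $P$, so $f P_w \mu'(w)$ and $w P_f \mu'(f)$. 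Because $\mu'(w)$ and $\mu'(f)$ are entries of the reduced lists, Remark~\ref{remark de reduccion M-M}(i) gives $\mu'(w) R_w \mu(w)$ and $\mu'(f) R_f \widetilde{\mu}(f)$, whence $f P_w \mu(w)$ and $w P_f \widetilde{\mu}(f)$. Since $\mu'$ is stable in $P^{\mu,\widetilde{\mu}}$, the pair $(w,f)$ cannot be acceptable there, so $f$ was deleted from $w$'s list (equivalently, by Remark~\ref{remark de reduccion M-M}(iii), $w$ from $f$'s list). The key observation is that Step~3 only propagates to one side a deletion already performed by Steps~1--2 on the other side; hence the removal of the pair must originate in one of Steps 1(a), 1(b), 2(a), 2(b). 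I would then dispatch the four cases: Step 2(a) would give $\mu(w) P_w f$ and Step 2(b) would give $\widetilde{\mu}(f) P_f w$, each contradicting an inequality just derived; Step 1(b) gives $f P_w \widetilde{\mu}(w)$, which together with $w P_f \widetilde{\mu}(f)$ makes $(w,f)$ block $\widetilde{\mu}$; and Step 1(a) gives $w P_f \mu(f)$, which together with $f P_w \mu(w)$ makes $(w,f)$ block $\mu$. The two surviving cases contradict the stability of $\mu$ and $\widetilde{\mu}$ under $P$, completing the argument.

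I expect the main obstacle to be the bookkeeping in this last case analysis---in particular, justifying cleanly that a Step~3 deletion is never an independent source of a removed pair but merely mirrors a Step~1 or Step~2 deletion, which is exactly what reduces the problem to the four tractable cases. A secondary, purely cosmetic point is the strict-versus-weak reading of ``$\mu P_F \mu' P_F \widetilde{\mu}$'': the natural statement concerns the weak interval $\mu R_F \mu' R_F \widetilde{\mu}$, which includes the endpoints $\mu,\widetilde{\mu}$ (themselves stable in $P^{\mu,\widetilde{\mu}}$ by Remark~\ref{remark de reduccion M-M}(ii)), and I would phrase the proof for that interval, the strict version following by excluding the endpoints.
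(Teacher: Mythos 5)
The paper contains no proof to compare against: Proposition~\ref{establePMUsiysolosireducido} is stated as taken from \cite{roth1992two}, so your blind attempt is in effect supplying an argument the authors chose to import. Judged on its own, your proof is correct and essentially complete. The forward direction rightly pairs the firm-side interval $\mu(f)\,R_f\,\mu'(f)\,R_f\,\widetilde{\mu}(f)$ with its worker-side dual $\widetilde{\mu}(w)\,R_w\,\mu'(w)\,R_w\,\mu(w)$ (via opposition of interests) to show that every $\mu'$-pair survives Steps 1--2 and hence Step 3, after which blocking pairs transfer verbatim between $P^{\mu,\widetilde{\mu}}$ and $P$ because reduction only deletes entries. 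The crux of the reverse direction---that a $P$-blocking pair of $\mu'$ must have been removed by one of Steps 1(a), 1(b), 2(a), 2(b), since Step 3 merely mirrors a deletion already performed on the other side---is exactly the right mechanism, and your four cases terminate correctly in contradictions with $f P_w \mu(w)$, with $w P_f \widetilde{\mu}(f)$, or with the stability of $\mu$ and $\widetilde{\mu}$ under $P$. Two small patches are worth recording. First, in the reverse direction the derivation of $f P_w \mu(w)$ and $w P_f \widetilde{\mu}(f)$ presumes $\mu'(w)\neq\emptyset$ and $\mu'(f)\neq\emptyset$; if, say, $\mu'(w)=\emptyset$, apply the single-agent theorem inside the reduced market, where $\mu$, $\mu'$ and $\widetilde{\mu}$ are all stable, to get $\mu(w)=\emptyset$, so the needed inequality holds trivially---a one-line fix using a fact you already invoke in the forward direction. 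Second, your closing remark on strict versus weak orderings is warranted rather than cosmetic: as literally stated with $P_F$, the ``if'' direction fails at $\mu'\in\{\mu,\widetilde{\mu}\}$, both of which lie in $S(P^{\mu,\widetilde{\mu}})$ by Remark~\ref{remark de reduccion M-M}(ii); the proposition must be read with $R_F$, and the paper itself needs that weak form when deriving Corollary~\ref{corolario nu estable}, since the output $\nu$ of the algorithm may coincide with $\mu_W$.
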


The following example illustrates the reduction procedure for a matching market.

 \begin{example}\label{Ejemplo 1}
Consider a market where $F=\{f_1,f_2,f_3,f_4\}$, $W=\{w_1,w_2,w_3,w_4\}$, and  the preference profile is given 
by:
\begin{center}
\begin{tabular}{l}
$P_{f_1}:w_1,w_2,w_3,w_4$\\
$P_{f_2}:w_2,w_1,w_4,w_3$\\
$P_{f_3}:w_3,w_4,w_1,w_2$\\
$P_{f_4}:w_4,w_3,w_2,w_1$\\
\end{tabular}~~~~~~~~~~~~~~~~~~~
\noindent\begin{tabular}{l}
$P_{w_1}:f_4,f_3,f_2,f_1$\\
$P_{w_2}:f_3,f_4,f_1,f_2$\\
$P_{w_3}:f_2,f_1,f_4,f_3$\\
$P_{w_4}:f_1,f_2,f_3,f_4$\\
\end{tabular}
 \end{center}
\noindent Let us consider the following stable matchings

$$
\mu=\begin{pmatrix}

f_1 & f_2 & f_3 &f_4  \\
 w_2& w_1 &  w_3& w_4  \\
 
\end{pmatrix}
\text{ and }
\mu_W=\begin{pmatrix}

f_1 & f_2 & f_3 & f_4 \\
w_4& w_3 & w_2& w_1 \\
 
\end{pmatrix}.$$
\medskip

\noindent  After the reduction procedure is performed, the reduced preference profile with respect to $\mu$ and $\mu_W$ is presented in Table \ref{tabla reducida del ejemplo}.
\begin{table}[h!]
\centering
  \begin{tabular}{l}
$P^{\mu}_{f_1}:w_2,w_3,w_4$\\
$P^{\mu}_{f_2}:w_1,w_4,w_3$\\
$P^{\mu}_{f_3}:w_3,w_4,w_1,w_2$\\
$P^{\mu}_{f_4}:w_4,w_3,w_2,w_1$\\
\end{tabular}~~~~~~~~~~~~~~~~~~~
\begin{tabular}{l}
$P^{\mu}_{w_1}:f_4,f_3,f_2$\\
$P^{\mu}_{w_2}:f_3,f_4,f_1$\\
$P^{\mu}_{w_3}:f_2,f_1,f_4,f_3$\\
$P^{\mu}_{w_4}:f_1,f_2,f_3,f_4$\\
\end{tabular}
\caption{Reduced preference profile $P^\mu$.}
\label{tabla reducida del ejemplo}
\end{table}\hfill $\Diamond$
\end{example}

\section{Algorithm}\label{seccion algoritmo}
Consider a labor market where all agents are matched under a stable matching and that such stable matching is not the worker-optimal. Thus, there is a worker and another stable matching in which she could be better off. This worker can resign her position in order to wait for a better offer. We show in this section that such a better offer always arrives and that the market is re-stabilized into a matching in which no worker is worse off. In each stage of the algorithm, the firm left alone in the previous stage chooses the best worker willing to be employed with this firm (such a worker always exists). This generates a new empty position in another firm. This process continues until the worker who disrupted the initial stability of the market receives an offer. Of course, this offer improves upon her initial situation.

Given a preference profile $P$ and a stable matching $\mu\in S(P)\setminus \{\mu_W\}$, the algorithm takes as inputs a worker $w^0$ such that $\mu(w^0)\neq\mu_W(w^0)$ and the reduced preference profile  $P^\mu$. We formally present the algorithm in Table \ref{tabla algoritmo}. 
\begin{table}[h!]
\centering
\begin{tabular}{l l}
\hline \hline
\multicolumn{2}{l}{\textbf{Algorithm:}}\vspace*{10 pt}\\

\textbf{Input} & A reduced profile $P^\mu$ and $w^0\in W$ such that $\mu(w^0)\neq\mu_W(w^0)$.\\

\textbf{Output} & A set of worker-firm pairs $A$, and a matching $\nu\in S(P^\mu)$.\vspace*{10 pt}\\
 & \hspace{-55pt}Define:\\
 & \hspace{-55pt} $f^0=\mu(w^0)$,\\
  & \hspace{-55pt} $ \nu^0  (f) =\left\{ 
\begin{array}{lcl}
 \emptyset  &  & \text{if } f =f^0 \\ 
\mu (f)&  & \text{otherwise}  \\ 
\end{array}%
\right. ,
$\\
& \hspace{-55pt} $\nu^0(w)=\left\{ 
\begin{array}{lcl}
f  &  & \text{if } w= \nu^0(f) \\ 
\emptyset &  & \text{otherwise}\\ 
\end{array}%
\right.,$ \\
& \hspace{-55pt} $A^0=\emptyset$.\\
\textbf{Stage $\boldsymbol{t\geq1}$} &  Let $W^t=\{w\in W\setminus\{w^0\}:f^{t-1}P^{\mu}_w \nu^{t-1}(w)\}$\\
& \hspace{20pt}\texttt{IF} $f^{t-1}P^{\mu}_{w^0} \mu(w^0)$\\
& \hspace{40 pt}\texttt{THEN} $W^t=W^t\cup\{w^0\}$\\
& Choose $w^t\in W^t$ such that $ w^t R^{\mu}_{f^{t-1}}w$ for each $w\in W^t$\\
& \hspace{20 pt}\texttt{IF} $w^t\neq w^0:$  \\
& \hspace{40 pt}\texttt{THEN} Define:\\
& \hspace{68 pt}$f^{t}=\nu^{t-1}(w^t)$  \\
&\hspace{68 pt}$
 \nu^t  (f) =\left\{ 
\begin{array}{lcl}
 \emptyset &  & \text{if } f =f^{t} \\ 
 w^{t}   &  & \text{if } f =f^{t-1} \\ 
\nu^{t-1} (f)&  & \text{otherwise}  \\ 
\end{array}%
\right. 
$\\
&\hspace{68 pt}$\nu^{t}(w)=\left\{ 
\begin{array}{lcl}
f  &  & \text{if } w= \nu^t(f) \\ 
\emptyset &  & \text{otherwise}\\ 
\end{array}%
\right.$ \\
&\hspace{68 pt}$A^t=A^{t-1}\cup \{(w^t,f^{t-1})\}$\\
&\hspace{40 pt}\texttt{AND} continue to Stage $t+1.$\\
& \hspace{20 pt}\texttt{ELSE}: Set $
 \nu  (f) =\left\{ 
\begin{array}{lcl}
  w^{0}   &  & \text{if } f =f^{t-1} \\ 
\nu^{t-1} (f)&  & \text{otherwise}  \\ 
\end{array}%
\right. 
$ \\
&\hspace{68 pt}$\nu  (w) =\left\{ 
\begin{array}{lcl}
f  &  & \text{if } w= \nu(f) \\ 
\emptyset &  & \text{otherwise}\\ 
\end{array}%
\right. 
$\\
&\hspace{68 pt}$A=A^{t-1}\cup\{(w^0,f^{t-1})\}$, and \texttt{STOP}. \\


\hline \hline
\end{tabular}

\caption{Re-stabilization algorithm}
\label{tabla algoritmo}
\end{table}

\bigskip

\begin{remark}\label{remark algoritmo}
Concerning the algorithm in Table \ref{tabla algoritmo}, notice that:
\begin{enumerate}[(i)]
    
    \item  If $\nu^t(w^0)=\emptyset$, then $(w^0,f^0)$ blocks $\nu^{t}$ under $  P^{\mu}$. Thus, $\nu^t$ is not a stable matching under $  P^{\mu}$. However, each $\nu^t$  is individually rational under $  P^{\mu}$.

    \item  For each stage $t$ of the algorithm, $\nu(w)R^{\mu}_w\nu^t(w)R^{\mu}_w\nu^{t-1}(w)$ for each $w\in W$. 
   
    \item For each $w\in W$ such that $\mu(w)\neq \nu(w)$, there is a stage $t$ of the algorithm in which $w$ is chosen, i.e., there is a stage $t$ such that $w=w^t.$
\end{enumerate}
  
\end{remark}

Next, we demonstrate that the algorithm is well-defined. The following theorem shows that: (i) for each stage of the algorithm, there is always at least one worker willing to accept the position in the offering firm (that was abandoned in the previous stage); (ii)  the worker who disrupted the initial stability of the market receives an improving offer, generating that the algorithm stops; (iii) the market always reaches stability once the algorithm stops.

\begin{theorem}\label{nu es estable}
For the algorithm presented in Table \ref{tabla algoritmo}, we have that:
\begin{enumerate}[(i)]
    \item For each stage $t$ of the algorithm, $W^t\neq \emptyset.$
 \item The algorithm stops in a finite number of stages.
 \item Let $\nu$ be the matching obtained by the algorithm, then $\nu\in S(P^\mu).$
\end{enumerate}

\end{theorem}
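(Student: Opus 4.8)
The plan is to prove the three claims in the natural order, since each builds on the previous. Throughout, I would work entirely under the reduced profile $P^\mu$, exploiting Remark~\ref{remark de reduccion M-M}: under $P^\mu$, the matching $\mu$ is the firm-optimal (worker-pessimal) stable matching and $\mu_W$ is the worker-optimal (firm-pessimal) stable matching, and acceptability is symmetric. The key structural fact I would keep in mind is that every worker $w$ with $\mu(w)\neq\mu_W(w)$ has a strictly nonempty reduced list with $\mu_W(w)$ on top and $\mu(w)$ at the bottom, so there is genuine room to improve.

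For part (i), I would fix a stage $t$ and argue by contradiction that $W^t=\emptyset$. If no worker (other than possibly $w^0$) strictly prefers the vacant firm $f^{t-1}$ to her current $\nu^{t-1}$-partner, and if moreover $w^0$ does not prefer $f^{t-1}$ to $\mu(w^0)$, I would like to manufacture a stable matching under $P^\mu$ that is strictly worse for some firm than $\mu$, contradicting that $\mu$ is firm-optimal; alternatively, I would show the configuration $\nu^{t-1}$ (suitably completed) is already blocking-free, which combined with Remark~\ref{remark algoritmo}(i) — that $\nu^{t-1}$ is \emph{not} stable while $w^0$ is unmatched — gives the contradiction. The cleanest route is probably to note that $f^{t-1}$ became vacant because its previous occupant left for a strictly better firm; by individual rationality and the chain of strict improvements recorded in Remark~\ref{remark algoritmo}(ii), the firm $f^{t-1}$ still has acceptable workers below its current (empty) status on its reduced list, and the worker-pessimality of $\mu$ guarantees at least one such worker strictly prefers $f^{t-1}$ to her $\nu^{t-1}$-assignment.

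For part (ii), termination, I would lean on the monotonicity in Remark~\ref{remark algoritmo}(ii): at every stage each worker weakly improves under $P^\mu$, and the worker $w^t$ who is chosen strictly improves (she moves from $\nu^{t-1}(w^t)$ to the strictly preferred $f^{t-1}$). Since each reduced preference list is finite and no worker can improve past the top of her list ($\mu_W(w)$), the total ``improvement potential'' summed over all workers is a finite nonincreasing-in-slack quantity that strictly decreases at each non-terminal stage. Hence only finitely many stages can occur before the \texttt{ELSE} branch (where $w^t=w^0$) is triggered, at which point the algorithm stops. I would state this as a strictly decreasing potential function taking values in a finite set.

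For part (iii), stability of the output $\nu$, I would verify individual rationality and the absence of blocking pairs under $P^\mu$. Individual rationality follows because every assignment made during the algorithm is to an acceptable agent on the (reduced) list, and the terminal assignment gives $w^0$ the firm $f^{t-1}$ which she strictly prefers to $\mu(w^0)$ by the \texttt{IF} condition that put her into $W^t$. For blocking pairs $(w,f)$: I would split into cases according to whether $w,f$ were touched by the vacancy chain. Using Remark~\ref{remark algoritmo}(ii) (every worker weakly improved relative to $\mu$) together with the stability of $\mu$ under $P^\mu$, any pair that would block $\nu$ but not $\mu$ must involve a worker who strictly improved; but the greedy choice rule — each vacated firm $f^{t-1}$ hires its \emph{most preferred} available willing worker — rules out a firm-side incentive to deviate, while the symmetric acceptability and the reduced-list extremality pin down the worker side. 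The main obstacle will be this blocking-pair analysis: I expect the delicate case to be a potential blocking pair $(w,f)$ where $f$ is a firm that was vacated and re-filled during the chain and $w$ is a worker who was passed over; handling it cleanly requires tracking, stage by stage, that once a firm hires its best currently-willing worker, no later-appearing worker it would have preferred can simultaneously prefer that firm to her final $\nu$-assignment. I would organize this as an invariant maintained across stages rather than a case check at the end.
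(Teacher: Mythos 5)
Your plans for (ii) and (iii) point in the right direction: for (ii), the strict-improvement/potential argument is essentially the paper's proof (the paper phrases it as: no worker--firm pair can be formed at two distinct stages, by Remark \ref{remark algoritmo}(ii) and finiteness), and the ``invariant maintained across stages'' you gesture at for (iii) is exactly the paper's Lemma \ref{par bloquenate de nu^t tambien de nu^t-1} (every blocking pair of $\nu^t$ is either $(w^0,f^0)$ or involves the currently vacant firm $f^t$), after which stability of $\nu$ follows because at the last stage $w^0$ is $f^{T-1}$'s most preferred willing worker and $\nu(w^0)P^\mu_{w^0}f^0$. But for (iii) you never state the invariant, and you explicitly flag its proof as ``the main obstacle,'' i.e., the part you have not carried out.

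The genuine gap is part (i), on which everything else depends: if some $W^t$ were empty, the algorithm could not take its next step, and (ii) and (iii) would be vacuous. None of your three suggested routes works as stated. First, a stable matching that is \emph{strictly worse for some firm} than $\mu$ does not contradict firm-optimality of $\mu$ under $P^\mu$; only a matching strictly \emph{better} for some firm would, and nothing in your setup produces one. Second, ``$\nu^{t-1}$ suitably completed is blocking-free'' can never be derived: by Remark \ref{remark algoritmo}(i) the pair $(w^0,f^0)$ blocks $\nu^{t-1}$ whatever $W^t$ is, and it also blocks the natural completion that matches $w^0$ to $f^{t-1}$, since $W^t=\emptyset$ entails $f^0 P^\mu_{w^0} f^{t-1}$ while $w^0=\mu(f^0)$ tops $f^0$'s reduced list; emptiness of $W^t$ only constrains blocking pairs involving $f^{t-1}$, so no contradiction emerges. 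Third, your ``cleanest route'' simply asserts the conclusion: that worker-pessimality of $\mu$ ``guarantees'' a willing worker is precisely the statement to be proved. The paper's proof has real content here: it exhibits the concrete witness $\overline{w}=\mu_W(f^{t-1})$, and to do so it must show, by an induction that traces the vacancy chain back to the stage at which $f^{t-1}$ lost its $\mu$-partner, that $\mu(f^{t-1})\neq\emptyset$ and $\mu(f^{t-1})\neq\mu_W(f^{t-1})$; it must invoke the Single Agent Theorem so that $\mu_W(f^{t-1})$ is an actual worker; and it then uses Remark \ref{remark de reduccion M-M}(i): $f^{t-1}=\mu_W(\overline{w})$ is the top of $\overline{w}$'s reduced list while $f^{t-1}$ is vacant, so $\overline{w}$ strictly prefers it to her current assignment and lies in $W^t$ (with a separate check of the \texttt{IF} condition when $\overline{w}=w^0$). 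None of these steps --- the trace-back, the Single Agent Theorem, the identification of the witness --- appears in your proposal, so part (i), and with it the theorem, remains unproven.
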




As, by Theorem \ref{nu es estable} (iii), $\nu \in S(P^{\mu})$, we have $\mu P^{\mu}_F \nu R^{\mu}_F \mu_W$. Then, the reduction procedure implies $\mu P_F \nu R_F \mu_W$. Therefore, applying Proposition \ref{establePMUsiysolosireducido} the next result follows.

\begin{corollary}\label{corolario nu estable}
    Let $\mu$ and $\nu$ be the input and output of the algorithm, respectively. Then, $\nu\in S(P)$ and $\mu P_F \nu.$
\end{corollary}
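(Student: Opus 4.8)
The plan is to read this corollary off as a transfer of information from the reduced market $P^\mu$ back to the original market $P$, exactly as the lattice machinery of Section~\ref{seccion preliminares} is designed to permit. The starting point is Theorem~\ref{nu es estable}(iii), which already delivers $\nu\in S(P^\mu)$, where by definition $P^\mu=P^{\mu,\mu_W}$. Thus the entire content of the corollary is to convert this membership in the reduced-market stable set into membership in $S(P)$ together with the strict comparison $\mu P_F\nu$; the substantive work has been done in Theorem~\ref{nu es estable} and Proposition~\ref{establePMUsiysolosireducido}, and what remains is bookkeeping.

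First I would locate $\nu$ inside the reduced lattice. By Remark~\ref{remark de reduccion M-M}(ii), under $P^\mu$ the matching $\mu$ is firm-optimal and $\mu_W$ is firm-pessimal, so every $\nu\in S(P^\mu)$ satisfies $\mu R^{\mu}_F\nu R^{\mu}_F\mu_W$. To upgrade the left inequality to the strict $\mu P^{\mu}_F\nu$ I would argue $\nu\neq\mu$ directly from the stopping rule of the algorithm: termination occurs only through the \texttt{ELSE} branch, which is reached precisely when $w^t=w^0$, and this branch is entered only because $f^{t-1}P^{\mu}_{w^0}\mu(w^0)$ and sets $\nu(w^0)=f^{t-1}$. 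Hence $\nu(w^0)P^{\mu}_{w^0}\mu(w^0)$, so $\nu(w^0)\neq\mu(w^0)$ and therefore $\nu\neq\mu$. Since firm preferences are strict, $\mu R^{\mu}_F\nu$ with $\mu\neq\nu$ gives $\mu P^{\mu}_F\nu$, and so $\mu P^{\mu}_F\nu R^{\mu}_F\mu_W$.

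Next I would transfer this chain to the original profile. Each reduced list is an order-preserving sublist of the corresponding original list, and by Remark~\ref{remark de reduccion M-M}(i)--(ii) the partners $\mu(f)$ and $\nu(f)$ both survive in $f$'s reduced list (they are the partners of stable matchings, so for each matched $f$ they are genuine entries). Consequently $\mu(f)R^{\mu}_f\nu(f)$ is equivalent to $\mu(f)R_f\nu(f)$ for every $f$, and likewise at the $\mu_W$ end, so the comparison carries over verbatim: $\mu P_F\nu R_F\mu_W$. Finally I would invoke Proposition~\ref{establePMUsiysolosireducido} with $\widetilde{\mu}=\mu_W$; its hypotheses hold because $\mu,\mu_W\in S(P)$ and, as $\mu\in S(P)\setminus\{\mu_W\}$ while $\mu_W$ is firm-pessimal under $P$, we have $\mu R_F\mu_W$ with $\mu\neq\mu_W$, hence $\mu P_F\mu_W$. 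The proposition then yields $\nu\in S(P)$, which together with the already established $\mu P_F\nu$ completes the proof.

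The only delicate points are bookkeeping rather than genuine obstacles. One is the order-transfer between $P^\mu$ and $P$ just described, which is harmless once one notes that reduction only deletes entries and never reorders them. The other is the boundary case $\nu=\mu_W$: there the strict interval in Proposition~\ref{establePMUsiysolosireducido} degenerates, but $\nu=\mu_W\in S(P)$ holds trivially and $\mu P_F\mu_W$ was already shown, so the conclusion is unaffected. Keeping these two checks in mind, I would present the corollary as the short consequence of Theorem~\ref{nu es estable}(iii) and Proposition~\ref{establePMUsiysolosireducido} that it is.
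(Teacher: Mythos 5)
Your proposal is correct and follows essentially the same route as the paper's own proof: invoke Theorem~\ref{nu es estable}(iii) to get $\nu\in S(P^\mu)$, use the lattice position $\mu P^{\mu}_F \nu R^{\mu}_F \mu_W$ in the reduced market, transfer this to $\mu P_F \nu R_F \mu_W$ under $P$ via the reduction procedure, and conclude with Proposition~\ref{establePMUsiysolosireducido}. Your added checks (strictness of $\mu P^{\mu}_F\nu$ via the stopping rule, and the boundary case $\nu=\mu_W$) are details the paper leaves implicit, not a different argument.
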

The algorithm is illustrated for the market in Example \ref{Ejemplo 1} as follows.
  
  \bigskip
  
 \noindent \textbf{Example 1 (Continued)} 
\emph{We apply the algorithm to the reduced profile $P^{\mu}$ presented in Table \ref{tabla reducida del ejemplo}, and considering $w^0=w_1.$ Define $f^0=\mu(w_1)=f_2$, $A^0=\emptyset,$ and 
$$
\nu^0=\begin{pmatrix}
f_1 & f_2 & f_3 & f_4& \emptyset  \\
w_2& \emptyset & w_3& w_4 & w_1\\
 \end{pmatrix}.$$
\medskip
\noindent In what follows, we detail each of its stages:
\newline
\noindent \textit{\textbf{Stage 1}}
As $W^1=\left\lbrace w_3,w_4\right\rbrace $ and $w_4P^{\mu}_{f_2}w_3$, then $ w^1=w_4$, and $ f^1=\nu^0(w_4)=f_4.$ Since $ w^1=w_4\neq w^0=w_1,$ we have 
$A^1=\left\lbrace (w_4,f_2)\right\rbrace $, and
$$
\nu^1=\begin{pmatrix}
f_1 & f_2 & f_3 & f_4& \emptyset  \\
w_2& w_4 & w_3& \emptyset & w_1\\
\end{pmatrix}.$$ 
\bigskip
\noindent \textit{\textbf{Stage 2}}
As $W^2=\left\lbrace w_1,w_2,w_3\right\rbrace $ and $w_3P^{\mu}_{f_4}w_2P^{\mu}_{f_4}w_1$, then $ w^2=w_3$, and $ f^2=\nu^1(w_3)=f_3.$ Since $ w^2=w_3\neq w^0=w_1,$ we have $A^1=\left\lbrace (w_3,f_4),(w_4,f_2)\right\rbrace $, and 
$$
\nu^2=\begin{pmatrix}
f_1 & f_2 & f_3 & f_4& \emptyset  \\
w_2& w_4 & \emptyset & w_3 & w_1\\
\end{pmatrix}.$$ 
\bigskip
\noindent \textit{\textbf{Stage 3}}
Lastly, as $W^3=\left\lbrace w_1,w_2\right\rbrace $ and $w_1P^{\mu}_{f_3}w_2$, then $ w^3=w_1$, and $ w^3=w_1=w^0.$ Therefore, the algorithm stops, and the outputs of the algorithm are $A=\left\lbrace (w_1,f_3),(w_3,f_4),(w_4,f_2)\right\rbrace $, and  
$$
\nu=\begin{pmatrix}
f_1 & f_2 & f_3 & f_4  \\
w_2& w_4 & w_1 & w_3 \\
\end{pmatrix}.$$}\hfill $\Diamond$

\section{Counting steps for re-stabilization}\label{seccion contando pasos}
This section contains two subsections. Subsection \ref{subsection ciclos} presents the notions of cycle in preferences and cyclic matching that are essential for counting the stages that the algorithm takes to re-stabilize. The main result of the paper establishes the number of stages that the algorithm needs to re-stabilize the market and is presented in Subsection \ref{subseccion contando}. This result can be interpreted as how many stages a worker has to wait for a better position when she resigns.
\subsection{Cycles in preferences}\label{subsection ciclos}
In this subsection we present the notions of cycle in preferences and cyclic matching,  first presented in \cite{irving1986complexity} for computing the full set of stable matchings. 
Consider a stable matching $\mu \in S(P),$  and the reduced preference profile with respect to $\mu$, denoted by  $P^{\mu}$. Recall some important facts about this reduced profile: (i) $\mu(f)$ is $f$'s most preferred partner and $\mu_W(f)$  is $f$'s least preferred partner according to $P^{\mu}_f$, for each $f \in F$; and (ii)    $\mu_W(w)$ is $w$'s most preferred partner and $\mu(w)$  is $w$'s least preferred partner according to $P^{\mu}_w$, for each $w \in W.$    A  cycle for $P^{\mu}$ can be seen as an ordered sequence of worker-firm pairs $\{(w_{1},f_{0}),(w_{2},f_{1}),\ldots,(w_r,f_{r-1}),(w_0,f_{r})\}$ such that $w_i=\mu(f_i)$  and  $w_{i+1}$ is the second most-preferred worker for $f_i$ according to $P^{\mu}_{f_i}.$ Formally,

\begin{definition}\label{defino cyclo}
Let $\mu\in S(P)$. A \textbf{cycle $\boldsymbol{\sigma$ for $P^{\mu}}$} is an ordered  sequence of worker-firm pairs $\sigma=\{(w_{1},f_{0}),(w_{2},f_{1}),\ldots,(w_{r} ,f_{r-1}),(w_{0} ,f_{r})\}\subseteq W\times F$  such that, for $i=0,\ldots,r,$ we have:
\begin{enumerate}[(i)]
\item $w_i=\mu(f_i)\neq \mu_W(f_i),$
 \item $w_{i+1}R^{\mu}_{f_i}w$ for each $w\in W\setminus\{w_i\},$ and where $w_{r+1}=w_0.$ 
 \end{enumerate} 
\end{definition}


Let $\sigma $ be a cycle for $P^{\mu}.$ Denote by $\sigma_F$ and $\sigma_W$ to the set of  firms and workers involve in cycle $\sigma$ respectively. Given a cycle $\{(w_{1},f_{0}),(w_{2},f_{1}),\ldots,(w_{r} ,f_{r-1}),(w_{0} ,f_{r})\}$ for $P^\mu$ can be used to obtain a new matching from matching $\mu$ by breaking the partnership between firm $f_i$ and worker $w_i$ and establishing a new partnership between firm $f_i$ and worker $w_{i+1}$ for each $i=0,\ldots,r$ (modulo $r+1$), keeping all remaining partnerships in $\mu$ unaffected. This new matching is called a cyclic matching. Formally,

\begin{definition}
Let $P$ be a market. Given $\mu\in S(P)$, and the reduced preference profile $P^{\mu}$, let $\sigma=\{(w_{1},f_{0}),(w_{2},f_{1}),\ldots,(w_{r} ,f_{r-1}),(w_{0} ,f_{r})\}$ be a cycle for $P^{\mu}$. The \textbf{cyclic matching} of $\mu$ is defined as follows:

\medskip
$\mu_\sigma(f)=\left\{
\begin{array}{ll}
\mu(f_{i+1})&\text{if } f=f_i \text{ for }i=0,\ldots,r-1,\\
\mu(f_{0})& \text{if } f=f_r,\\
\mu(f) &\text{ for each }f\not\in{\sigma_F}.
\end{array}\right.$

\end{definition}

The resulting cyclic matching is stable not only in the reduced preference profile $P^\mu$, but also in the original preference profile $P$ \cite[see][for more details]{irving1986complexity,roth1992two}.

The following theorem is taken from \cite{gusfield1987three}.
\begin{theorem}\label{asignaciones consec}
Let $P$ be a market, and let $\mu_F$ and $\mu_W$ be the firm-optimal and worker-optimal stable matchings, respectively. Then, there is a sequence of stable matchings $\{\mu^0,\ldots,\mu^q\}$ such that $\mu^0=\mu_F$, $\mu^{i}$ is a cyclic matching of $\mu^{i-1}$ for $i=1,\ldots,q$, and $\mu^q=\mu_W$. Moreover, such a sequence may not be unique but all sequences contain the same cycles (probably in different order and only once). 
\end{theorem}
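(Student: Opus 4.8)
The plan is to build the sequence constructively using the lattice and cycle machinery already established, and then argue uniqueness of the cycle set separately. First I would set $\mu^0=\mu_F$ and proceed inductively: given $\mu^{i-1}\in S(P)$ with $\mu_F R_F \mu^{i-1} R_F \mu_W$ and $\mu^{i-1}\neq\mu_W$, I would show a cycle for $P^{\mu^{i-1}}$ exists. By Remark~\ref{remark de reduccion M-M}(i), in the reduced profile $P^{\mu^{i-1}}$ each firm $f$ has $\mu^{i-1}(f)$ as its top choice and $\mu_W(f)$ as its bottom choice. Since $\mu^{i-1}\neq\mu_W$, there is some firm $f$ with $\mu^{i-1}(f)\neq\mu_W(f)$, so $f$'s reduced list has length at least two and condition (i) of Definition~\ref{defino cyclo} is satisfiable. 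Starting from such an $f_0$, I would set $w_1$ to be $f_0$'s second-most-preferred worker under $P^{\mu^{i-1}}_{f_0}$, let $f_1=\mu^{i-1}(w_1)^{-1}$ in the sense that $w_1=\mu^{i-1}(f_1)$, and iterate; because $F$ is finite this chain must eventually revisit a firm, closing into a cycle. I would then take $\mu^i$ to be the cyclic matching $\mu^{i-1}_\sigma$, which by the statement following the cyclic-matching definition lies in $S(P)$, and check that $\mu^{i-1} P_F \mu^i$ strictly, guaranteeing termination in finitely many steps $q$ at $\mu^q=\mu_W$.

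The key structural input for the strict improvement is that moving along a cycle replaces each $w_i=\mu^{i-1}(f_i)$ by a less-preferred worker $w_{i+1}$ (less preferred under $P^{\mu^{i-1}}_{f_i}$, hence under $P_{f_i}$), so $\mu^i P_F \mu^{i-1}$ fails and instead $\mu^{i-1} P_F \mu^i$, moving us strictly down the firm-lattice toward $\mu_W$. Finiteness of the number of stable matchings then bounds $q$. I would also verify that each $\mu^i$ satisfies $\mu_F R_F \mu^i R_F \mu_W$ so that the induction hypothesis is preserved and Proposition~\ref{establePMUsiysolosireducido} keeps everything inside $S(P)$.

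The hard part will be the uniqueness claim: that although the sequence is not unique, every maximal sequence of cyclic matchings from $\mu_F$ to $\mu_W$ uses the same set of cycles, each exactly once (only the order varies). The natural approach is to associate to each cycle $\sigma$ the specific set of worker-firm ``rematchings'' it encodes and to show that the multiset of all pairs $(f,\mu^{i-1}(f))\to(f,\mu^i(f))$ that ever change along any such sequence is an invariant of the pair $(\mu_F,\mu_W)$, determined by the lattice itself rather than by the path taken. Concretely I would argue that a cycle for $P^\mu$ is completely determined by its set of firms together with the ``top two'' structure of the reduced profile, and that whenever two different stable matchings $\mu,\mu'$ in the chain admit a common applicable cycle, the cycles commute, so any two sequences are related by transpositions of adjacent commuting cycles. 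This is essentially a diamond/confluence argument on the lattice, and making it fully rigorous—rather than merely plausible—is where the real work lies; since the theorem is quoted from \cite{gusfield1987three}, I would lean on that reference for the uniqueness half while giving the constructive existence half in full.
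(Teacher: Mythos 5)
Your proposal is correct in substance, but it differs from the paper in an important practical way: the paper gives \emph{no proof at all} of this theorem --- it is quoted verbatim from \cite{gusfield1987three}, and nothing in Appendix \ref{apendice pruebas} addresses it. So you have actually supplied more than the paper does. Your constructive existence argument is the standard rotation-elimination argument of Irving--Leather/Gusfield and is sound: starting from a firm $f_0$ with $\mu^{i-1}(f_0)\neq\mu_W(f_0)$, following second choices and current partners must revisit a firm, and the tail of the chain from the first repeated firm onward (note: not necessarily the whole chain, since the initial segment may be discarded) is a cycle in the sense of Definition \ref{defino cyclo}; eliminating it gives a stable matching strictly below $\mu^{i-1}$ in $R_F$, and finiteness of $S(P)$ forces termination at $\mu_W$. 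One step you should make explicit is why the chain can be continued at all: you need each successor firm $f_1=\mu^{i-1}(w_1)$ to again have a reduced list with at least two entries. This follows from Remark \ref{remark de reduccion M-M}: since $w_1$ is the \emph{second} entry of $f_0$'s list, $f_0$ lies on $w_1$'s reduced list (part (iii)) and $f_0\neq\mu^{i-1}(w_1)$, so $w_1$'s list has two entries, whence $\mu_W(w_1)\neq\mu^{i-1}(w_1)=f_1$ and thus $\mu^{i-1}(f_1)\neq\mu_W(f_1)$ --- but this is a routine verification, not a gap in the idea. For the uniqueness-of-cycles claim you defer to \cite{gusfield1987three}, which is exactly what the paper does for the entire statement, so no discrepancy arises there; your sketched commutation/confluence argument is indeed the shape of the real proof, but carrying it out rigorously would be substantial extra work that neither you nor the paper undertakes.
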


Let $\mu$ and $\nu$ be the input and output of the algorithm, respectively. Notice that, by the reduction procedure, $\mu$ and $\nu$ are the firm-optimal and worker-optimal stable matchings under $P^{\mu,\nu}$, respectively. A \textbf{sequence of cycles generated from $\boldsymbol{\mu$ to $\nu}$} is a sequence of cycles $\{\sigma^1,\ldots,\sigma^k\}$ such that there is a sequence of stable matchings $\{\mu^0,\ldots,\mu^k\}$ with $\mu^0=\mu$,  $\mu^{i}=\mu^{i-1}_{\sigma^{i}}$ where $\sigma^{i}$ is a cycle for $P^{\mu^{i-1},\nu}$ with for $i=1,\ldots,k$ and $\mu^k=\nu$. Theorem \ref{asignaciones consec} implies that such sequence exists.
Notice that the sequence of cycles generated from $\mu$ to $\nu$ may not be unique.

\begin{lemma}\label{el ultimo ciclo de dos suceciones es el mismo}
If two different sequences of cycles are generated from $\mu$ to $\nu$, the last cycle of each sequence is the
same.
 
\end{lemma}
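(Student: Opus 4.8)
The plan is to reduce the statement to a single structural fact about the resigning worker $w^0$: that $w^0$ changes partner exactly once along any sequence, and only in its last cycle. Granting this, the conclusion is immediate. Recall that a worker $w$ is moved at step $i$ of a sequence (i.e. $\mu^{i-1}(w)\neq\mu^{i}(w)$) if and only if $w$ belongs to the worker-set of $\sigma^{i}$. By Theorem~\ref{asignaciones consec}, any two sequences generated from $\mu$ to $\nu$ use the same set of cycles $\{\sigma^1,\dots,\sigma^k\}$, each exactly once. If $w^0$ belongs to the worker-set of exactly one of these cycles, then that cycle is forced to be last in every sequence: since $\mu(w^0)=f^0\neq\nu(w^0)$ (the algorithm strictly improves $w^0$, by Corollary~\ref{corolario nu estable} together with the stopping rule), $w^0$ is moved at some step, and if it is moved only once that step must be the final one, because after reaching $\nu(w^0)$ — its best partner under $P^{\mu,\nu}$ by Remark~\ref{remark de reduccion M-M} — it can no longer improve. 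Hence both sequences end with the unique cycle of the common set whose worker-set contains $w^0$, and these coincide.

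First I would make the reduction precise. Fix a sequence $\mu=\mu^0,\dots,\mu^k=\nu$ with cycles $\sigma^1,\dots,\sigma^k$. By Remark~\ref{remark de reduccion M-M}, in the reduced profile $P^{\mu,\nu}$ the firm $\nu(w^0)$ is the first and $f^0=\mu(w^0)$ the last entry of $w^0$'s list, and as one moves from $\nu$ up to $\mu$ the worker $w^0$ is successively matched to the entries of that list. Thus the number of cycles whose worker-set contains $w^0$ equals $|P^{\mu,\nu}_{w^0}|-1$, and ``$w^0$ is moved exactly once'' is equivalent to the assertion that $w^0$ has exactly two stable partners in $[\nu,\mu]$, so that $P^{\mu,\nu}_{w^0}:\nu(w^0),f^0$. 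Granting this, Proposition~\ref{establePMUsiysolosireducido} and Definition~\ref{defino cyclo} let me read off that the single cycle containing $w^0$ is precisely the one sending $w^0$ directly from $f^0$ to $\nu(w^0)$, which is determined by $\mu$, $\nu$ and $w^0$ alone.

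The hard part, which I expect to be the main obstacle, is exactly this fact: no stable matching $\eta$ with $\mu\,P_F\,\eta\,P_F\,\nu$ other than $\nu$ itself assigns $w^0$ a firm strictly better than $f^0$. Here the specific structure of the algorithm must enter, not merely that $\mu,\nu$ are the firm- and worker-optimal matchings of $P^{\mu,\nu}$: for an arbitrary pair $\mu\,P_F\,\nu$ whose difference is a single rotation-cycle the last cycle need not be unique (such an interval can have two distinct maximal rotations), so the single-chain nature of the difference is by itself insufficient. My plan is to establish that the output is the $R_F$-greatest stable matching at which $w^0$ strictly improves upon $\mu(w^0)$. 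Because each vacated position is filled by the most preferred \emph{available} worker, the trajectory of partners of every firm stays as high as possible, which I would convert into $R_F$-maximality of $\nu$ subject to $w^0$ being better off. Then any stable $\eta\in[\nu,\mu]$ with $\eta(w^0)\,P_{w^0}\,f^0$ would satisfy $\eta\,R_F\,\nu$, forcing $\eta=\nu$, so every strictly intermediate matching keeps $w^0$ at $f^0$; this yields $P^{\mu,\nu}_{w^0}:\nu(w^0),f^0$ and completes the proof. I anticipate that turning the greedy ``best available worker'' rule into this extremal characterization — most delicately, ruling out an intermediate stable matching that improves $w^0$ — is where the real work lies.
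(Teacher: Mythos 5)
Your proposal is a plan rather than a proof: the step you yourself flag as ``the hard part'' --- that no stable matching $\eta$ with $\mu P_F \eta P_F \nu$, $\eta\neq\nu$, assigns $w^0$ a firm strictly better than $f^0$ (equivalently, that $\nu$ is the $R_F$-greatest stable matching at which $w^0$ strictly improves) --- is precisely where the algorithm's structure has to enter, and you leave it unproved. You correctly observe that generic lattice/rotation facts cannot suffice (an interval $[\nu,\mu]$ can have two distinct maximal rotations), but that observation makes the deferred claim the entire content of the lemma. The paper closes this gap differently and more economically: Lemma \ref{w0f0 en ultimo ciclo} shows, by tracking the stages $\nu^t$ of the algorithm and exhibiting a blocking pair for the penultimate stable matching $\mu^{k-1}$ otherwise, that the last pair $(w^0,f^{T-1})$ formed by the algorithm belongs to the last cycle of \emph{every} sequence. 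Note this is strictly weaker than your extremal characterization --- it does not assert that intermediate stable matchings keep $w^0$ at $f^0$ --- yet it is all that is needed, so even if your characterization is true, proving it would be doing more work than the lemma requires.

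There are two further soft spots. First, the justification in your opening paragraph --- ``if $w^0$ is moved only once, that step must be the final one, because after reaching $\nu(w^0)$ it can no longer improve'' --- is a non sequitur on its own: cycles not involving $w^0$ could in principle be eliminated \emph{after} the cycle that moves $w^0$; the deduction only becomes valid once the hard part is in place (then $\mu^{i}(w^0)=f^0$ for all $i<k$ forces every cycle except the last to avoid $w^0$). Relatedly, your identity ``number of cycles containing $w^0$ equals $|P^{\mu,\nu}_{w^0}|-1$'' is unjustified, since the reduced list of $w^0$ may contain firms that are never stable partners of $w^0$; recall that this lemma must hold \emph{without} the assumption that all acceptable pairs are stable. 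Second, even granting that in every sequence only the last cycle moves $w^0$, your concluding assertion that this cycle ``is determined by $\mu$, $\nu$ and $w^0$ alone'' still needs an argument: the two candidate last cycles are computed in the reduced profiles of possibly different penultimate matchings $\mu^{k-1}$ and $\mu^{k'-1}$, so one must still check --- as the paper does in its final step --- that the shared pair $(w^0,f^{T-1})$, together with the fact that each last cycle maps each of its firms onto its $\nu$-partner, forces the two cycles to coincide.
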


Below, we return to Example \ref{Ejemplo 1} to illustrate the calculation of cycles and cyclic matchings in the reduced preference profile.

\medskip

\noindent \textbf{Example 1 (continued)}  Considering the reduced preference profile $P^\mu$, and the output of the algorithm $\nu\in S(P^\mu)$, we compute the reduced preference profile $P^{\mu,\nu}$ presented in Table \ref{tabla mu y nu reducida del ejemplo}.
\begin{table}[h!]
\centering
  \begin{tabular}{l}
$P^{\mu,\nu}_{f_1}:w_2$\\
$P^{\mu,\nu}_{f_2}:w_1,w_4$\\
$P^{\mu,\nu}_{f_3}:w_3,w_4,w_1$\\
$P^{\mu,\nu}_{f_4}:w_4,w_3$\\
\end{tabular}~~~~~~~~~~~~~~~~~~~
\begin{tabular}{l}
$P^{\mu,\nu}_{w_1}:f_3,f_2$\\
$P^{\mu,\nu}_{w_2}:f_1$\\
$P^{\mu,\nu}_{w_3}:f_4,f_3$\\
$P^{\mu,\nu}_{w_4}:f_2,f_3,f_4$\\
\end{tabular}
\caption{Reduced preference profile $P^{\mu,\nu}$.}
\label{tabla mu y nu reducida del ejemplo}
\end{table}\newline
\noindent We can compute cycle $\sigma^1=\{(w_4,f_3),(w_3,f_4)\}$ and the cyclic matching of $\mu$
$$\mu^1=\begin{pmatrix}
f_1 & f_2 & f_3 & f_4  \\
w_2& w_1 & w_4&  w_3\\
\end{pmatrix}.$$
\noindent To compute the next cycle in the sequence of cycles generated from $\mu$ to $\nu$, first we compute the reduced preference profile $P^{\mu^1,\nu}$ presented in Table \ref{tabla mu1 y nu reducida del ejemplo}.
\begin{table}[h!]
\centering
  \begin{tabular}{l}
$P^{\mu^1,\nu}_{f_1}:w_2$\\
$P^{\mu^1,\nu}_{f_2}:w_1,w_4$\\
$P^{\mu^1,\nu}_{f_3}:w_4,w_1$\\
$P^{\mu^1,\nu}_{f_4}:w_3$\\
\end{tabular}~~~~~~~~~~~~~~~~~~~
\begin{tabular}{l}
$P^{\mu^1,\nu}_{w_1}:f_3,f_2$\\
$P^{\mu^1,\nu}_{w_2}:f_1$\\
$P^{\mu^1,\nu}_{w_3}:f_4$\\
$P^{\mu^1,\nu}_{w_4}:f_2,f_3$\\
\end{tabular}
\caption{Reduced preference profile $P^{\mu^1,\nu}$.}
\label{tabla mu1 y nu reducida del ejemplo}
\end{table}\newline
\noindent Then, the next cycle is $\sigma^2=\{(w_4,f_2),(w_1,f_3)\}.$ Notice that the cyclic matching of $\mu^1$ is
$$\mu^2=\begin{pmatrix}
f_1 & f_2 & f_3 & f_4  \\
w_2& w_4 & w_1&  w_3\\
\end{pmatrix}=\nu.$$
\hfill $\Diamond$

\subsection{Counting steps}\label{subseccion contando}

In this subsection, we establish how many stages the algorithm takes in order to re-stabilize the market when worker $w^0$ decides to improve her labor situation. To do this, we make the crucial assumption that all acceptable pairs are stable. Given a market $P$, recall that a pair $(w,f)$ is \textbf{stable for $\boldsymbol{P}$}  if there is a stable matching $\mu$ such that $f=\mu(w)$ \citep[see, for instance,][]{gusfield1987three}. 
 First, we show that each pair that the algorithm computes belongs to one cycle of any sequence of cycles generated from the input matching to the output matching of the algorithm. Moreover, we demonstrate that the algorithm computes all 
 but one pair of each cycle in any sequence, except for the last cycle for which it computes the whole set of pairs.\footnote{By Lemma \ref{el ultimo ciclo de dos suceciones es el mismo}, the \emph{last} cycle is well-defined because it is the same for all sequences.}

Given a reduced preference profile $P^\mu$ and a worker $w^0$, denote by $A(P^\mu,w^0)$ to the output of the algorithm applied to the preference profile $P^{\mu}$ when $w^0$ decides to improve her labor situation.
The following lemma states that every worker-firm pair in $A(P^{\mu},w^0)$ belongs to exactly one cycle of the sequence of cycles generated from $\mu $ to $\nu $.

\begin{lemma}\label{cada par del algoritmo en un ciclo}
Let $\{\sigma^1,\ldots,\sigma^k\}$ be any sequence of cycles generated from $\mu$ to $\nu$. For each $(w,f)\in A(P^\mu, w^0)$ there is  exactly one  cycle $\sigma \in \{\sigma^1,\ldots,\sigma^k\}$ such that $(w,f)\in \sigma$. 
\end{lemma}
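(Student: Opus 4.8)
The plan is to show two things for an arbitrary pair $(w,f)\in A(P^\mu,w^0)$: first, that $(w,f)$ belongs to \emph{at least} one cycle of the sequence, and second, that it belongs to \emph{at most} one. The "at most one" part should follow almost immediately from the structure of a sequence of cycles generated from $\mu$ to $\nu$: since $\mu^0=\mu$, each $\mu^i=\mu^{i-1}_{\sigma^i}$, and $\mu^k=\nu$, a given worker-firm partnership can be created at most once along this chain of cyclic matchings. More precisely, I would argue that each pair $(w,f)$ appearing in some $\sigma^i$ corresponds to the partnership $f=\mu^i(w)$ that is newly formed at step $i$ (by the definition of the cyclic matching, $\mu^i_\sigma$ assigns $f_j$ to $w_{j+1}$, so the pair $(w_{j+1},f_j)\in\sigma^i$ records exactly the new partnership $\mu^i(w_{j+1})=f_j$). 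Because the matchings strictly improve for workers along the sequence (by Remark on $R_F$/$R_W$ and Proposition~\ref{establePMUsiysolosireducido}, each $\mu^i$ is strictly between $\mu$ and $\nu$), once a worker $w$ is matched to $f$ at some step she is never matched to $f$ again, so $(w,f)$ cannot reappear in a later cycle. This gives uniqueness.

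For the "at least one" part, I would characterize which pairs are in $A(P^\mu,w^0)$ and relate them to the partnerships that differ between $\mu$ and $\nu$. By Corollary~\ref{corolario nu estable}, $\nu\in S(P)$ with $\mu P_F\nu$; and by Remark~\ref{remark algoritmo}(iii), every worker $w$ with $\mu(w)\neq\nu(w)$ is chosen at some stage $t$, i.e.\ $w=w^t$. The pairs recorded in $A$ are exactly $\{(w^t,f^{t-1})\}$ together with the terminal pair $(w^0,f^{t-1})$; I would verify that for each such recorded pair, $f^{t-1}=\nu(w^t)$, so that the set $A(P^\mu,w^0)$ is precisely $\{(w,\nu(w)):\mu(w)\neq\nu(w)\}$. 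On the other side, I would show that the union $\bigcup_{i=1}^k\sigma^i$ also equals $\{(w,\nu(w)):\mu(w)\neq\nu(w)\}$: each cycle $\sigma^i$ records the new partnerships created when passing from $\mu^{i-1}$ to $\mu^i$, so telescoping the sequence $\mu=\mu^0,\mu^1,\dots,\mu^k=\nu$ shows that a worker's final partner $\nu(w)$ (when it differs from $\mu(w)$) is assigned at exactly the step where $w$ last changes partner, and this pair lies in the corresponding $\sigma^i$. Matching these two descriptions gives that every pair of $A(P^\mu,w^0)$ lies in some cycle.

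The main obstacle I anticipate is the bookkeeping in identifying $A(P^\mu,w^0)$ with $\{(w,\nu(w)):\mu(w)\neq\nu(w)\}$. The algorithm records the pair $(w^t,f^{t-1})$, where $f^{t-1}$ is the firm vacated at the \emph{previous} stage, not obviously $\nu(w^t)$; one must track the vacancy chain carefully and use that the matchings $\nu^t$ are monotone in the worker order (Remark~\ref{remark algoritmo}(ii)) together with the fact that, once chosen, a worker's assignment in the final $\nu$ is fixed. I would handle this by showing that when $w^t$ is selected and assigned to $f^{t-1}$, this is indeed $w^t$'s partner under the \emph{final} output $\nu$ — that is, $w^t$ is never displaced again — which follows because the worker chosen at stage $t$ is the $P^\mu_{f^{t-1}}$-best available worker and the reduced-profile structure (Remark~\ref{remark de reduccion M-M}(i)) forces this assignment to coincide with the worker-optimal assignment under $P^{\mu,\nu}$, namely $\nu$. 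The telescoping identity for the cycles is comparatively routine given Theorem~\ref{asignaciones consec} and the definition of cyclic matching, so the crux is pinning down the algorithm's output set precisely.
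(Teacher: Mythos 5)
Your ``at most one'' argument is sound, but the ``at least one'' half has a genuine gap: it rests on the identification $A(P^\mu,w^0)=\{(w,\nu(w)):\mu(w)\neq\nu(w)\}$, i.e., on your claim that a worker, once selected by the algorithm, ``is never displaced again.'' This is false, even under the standing assumption that all acceptable pairs are stable. Consider $W=\{w^0,w_1,w_2,x,y\}$, $F=\{h_0,h_1,h_2,g_1,g_2\}$ with preferences $P_{w_1}: h_0,h_1,g_2,g_1$; $P_{w_2}: h_1,h_2,g_1,g_2$; $P_{x}: g_2,h_1$; $P_{y}: g_1,h_2$; $P_{w^0}: h_2,h_0$; and $P_{g_1}: w_1,w_2,y$; $P_{g_2}: w_2,w_1,x$; $P_{h_1}: x,w_1,w_2$; $P_{h_2}: y,w_2,w^0$; $P_{h_0}: w^0,w_1$. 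One can check that every acceptable pair here is stable, and that $\mu_F$ matches $w_1$ to $g_1$, $w_2$ to $g_2$, $x$ to $h_1$, $y$ to $h_2$, and $w^0$ to $h_0$. Running the algorithm with input $\mu=\mu_F$ and resigning worker $w^0$ produces the chain: $h_0$ hires $w_1$, then $g_1$ hires $w_2$, then $g_2$ hires $x$, then $h_1$ hires $w_2$ \emph{again}, then $g_1$ (vacated a second time) hires $y$, then $h_2$ hires $w^0$. Thus $|A(P^{\mu_F},w^0)|=6$ although only five workers change partners, and $(w_2,g_1)\in A(P^{\mu_F},w^0)$ with $g_1\neq\nu(w_2)=h_1$. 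This run is exactly what Theorem \ref{teorema principal} predicts (the sequence of cycles has $k=4$ with cardinalities $2,2,2,3$, so $|A|=3+1+1+1=6$); your claimed characterization of $A$, which would force $|A|=5$ here, therefore contradicts the very theorem this lemma is meant to support.

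The companion identity you invoke, $\bigcup_{i=1}^k\sigma^i=\{(w,\nu(w)):\mu(w)\neq\nu(w)\}$, is also false, and already so in the paper's Example \ref{Ejemplo 1}: $(w_4,f_3)\in\sigma^1$ while $\nu(w_4)=f_2$. Cycles record the \emph{intermediate} partnerships $\mu^i(w)$, not only the final ones; only the inclusion $\supseteq$ holds. There is, moreover, a decisive structural red flag: your argument nowhere uses the hypothesis that all acceptable pairs are stable, yet the lemma fails without it (Appendix \ref{apendice sub b1}: $(w_1,f_4)\in A(P^{\mu_F},w_4)$ belongs to no cycle of the sequence), so no proof along the lines you propose can be correct. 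That assumption is precisely what the paper's proof leans on: each pair $(w^t,f^{t-1})$ formed by the algorithm is acceptable, hence stable, hence lies in exactly one cycle by Theorem 4 of \cite{gusfield1987three} (this also subsumes your uniqueness argument); and that cycle must belong to the sequence generated from $\mu$ to $\nu$, since otherwise $w^t$ would lie strictly below $\nu(f^{t-1})$ in $P^\mu_{f^{t-1}}$, so that $w^\star=\nu(f^{t-1})\notin W^t$, giving $\nu^{t-1}(w^\star)P^\mu_{w^\star}f^{t-1}=\nu(w^\star)$ and contradicting the monotonicity of workers' assignments in Remark \ref{remark algoritmo} (ii).
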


The following lemma states that: (i) each pair of the last cycle is computed by the algorithm ($A(P^\mu,w^0)$), and (ii) if the sequence of cycles generated from $\mu$ to $\nu$ has at least two cycles, all but one of the pairs in the first cycle of the sequence belong to $A(P^\mu,w^0)$. Formally, 



\begin{lemma}\label{cardinalidad del primer ciclo y ultimo ciclo}
    Let $\{\sigma^1,\ldots,\sigma^k\}$ be any sequence of cycles generated from $\mu$ to $\nu$. The following hold:
    \begin{enumerate}[(i)]
        \item  $\sigma^k\subseteq A(P^\mu,w^0).$
        \item If  $k\neq1,$ then $|A(P^\mu,w^0)\cap \sigma^1|=|\sigma^1|-1.$
    \end{enumerate}
\end{lemma}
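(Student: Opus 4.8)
The plan is to reduce both items to a single bookkeeping statement---\emph{which firms of a cycle already sit at their final $\nu$-partner at the step where that cycle is applied}---and then to analyze the first cycle through the firm-permutation induced by the vacancy chain. As a preliminary (Step $0$) I would pin down the output explicitly: $A(P^\mu,w^0)=\{(\nu(f),f)\in W\times F:\mu(f)\neq\nu(f)\}$, equivalently $\{(w,\nu(w)):\mu(w)\neq\nu(w)\}$. The inclusion ``$\subseteq$'' holds because at stage $t$ the firm $f^{t-1}$ hires $w^t$ and, under the running assumption that all acceptable pairs are stable, no firm already visited by the chain is ever vacated again; hence $\nu(w^t)=f^{t-1}$ and each recorded pair is a genuine change. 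The inclusion ``$\supseteq$'' is exactly Remark~\ref{remark algoritmo}(iii), and this is the only place the hypothesis is used.

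Next (Step $1$) I would reformulate membership in a cycle. Since $\mu^i=\mu^{i-1}_{\sigma^i}$, every pair of $\sigma^i$ has the form $(\mu^i(f),f)$ with $f\in\sigma^i_F$, so by Step $0$ such a pair lies in $A(P^\mu,w^0)$ if and only if $\mu^i(f)=\nu(f)$, i.e. if and only if $f$ is moved by no later cycle. Consequently $|A(P^\mu,w^0)\cap\sigma^i|=|\sigma^i_F\setminus\bigcup_{j>i}\sigma^{j}_{F}|$. For $i=k$ the union on the right is empty, and each $f\in\sigma^k_F$ is genuinely changed, because partners can only worsen for firms along $\mu^0,\dots,\mu^k$ while $f$ moves at the last step; thus every pair of $\sigma^k$ is a final match and $\sigma^k\subseteq A(P^\mu,w^0)$. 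This is item (i).

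For item (ii), the formula of Step $1$ shows it is equivalent to $|\sigma^1_F\cap\bigcup_{j\ge2}\sigma^{j}_{F}|=1$: the first cycle shares \emph{exactly one} firm with the remaining cycles. The lower bound I would obtain from the permutation $\pi$ of the changed firms defined by $\nu(f)=\mu(\pi(f))$. Reading it off the vacancy chain gives $\pi=(f^0\,f^1\cdots f^{T-1})$, a single cycle, while telescoping the cyclic matchings gives $\pi=\tau_1\cdots\tau_k$, where $\tau_i$ is the cyclic permutation that $\sigma^i$ induces on $\sigma^i_F$. If $\sigma^1_F$ were disjoint from every later cycle, then $\tau_2\cdots\tau_k$ would fix $\sigma^1_F$ pointwise, so $\pi$ would map $\sigma^1_F$ onto itself---contradicting that $\pi$ is a single cycle on the strictly larger set of all changed firms (this is where $k\ge2$ enters). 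Hence the intersection has size at least one.

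The hard part, and the main obstacle, will be the matching upper bound: that $\sigma^1$ is glued to the rest through a \emph{single} firm. My plan is to exploit that each firm's partner strictly descends through the cycles that contain it; in permutation language this should force the factorization $\pi=\tau_1\cdots\tau_k$ to have minimal length, so that the incidence between cycles and their shared firms carries no cyclic pattern (a tree), with the minimal cycle $\sigma^1$ a leaf. A more self-contained alternative is a direct exchange argument: assuming two firms of $\sigma^1$ were moved again, I would trace their two continuations forward to $\nu$ and back along the chain through $w^0$, producing either a blocking pair for some intermediate cyclic matching or a contradiction with the uniqueness of the last cycle (Lemma~\ref{el ultimo ciclo de dos suceciones es el mismo}). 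Once this single-gateway property is established, $|A(P^\mu,w^0)\cap\sigma^1|=|\sigma^1_F|-1=|\sigma^1|-1$, completing item (ii).
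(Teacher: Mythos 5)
Your item (i) is essentially sound and close to the paper's own argument (changed workers keep their last assignment, and every firm in $\sigma^k$ is genuinely changed). The problems are with your Step 0 and with item (ii). Step 0 is not a preliminary; it is the crux, and you do not prove it. The assertion that ``no firm already visited by the chain is ever vacated again'' (equivalently, that no worker is chosen twice, so that $A(P^\mu,w^0)=\{(\nu(f),f):\mu(f)\neq\nu(f)\}$) does not follow from the definition of the algorithm, nor directly from the assumption that all acceptable pairs are stable. That assumption only yields that every pair the algorithm forms lies in some cycle of the sequence (Lemma \ref{cada par del algoritmo en un ciclo}), which is perfectly compatible with a worker moving several times through pairs belonging to different cycles; nothing you say rules this out. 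Indeed, by your own Step 1, Step 0 is equivalent to the global statement that every non-last cycle of the sequence shares exactly one firm with the union of the later cycles, i.e.\ to the very combinatorial content of items (i)--(ii) propagated along the whole sequence. Deriving the lemma from Step 0 is therefore close to circular: the bookkeeping identity you start from is at least as strong as what is to be proved.

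Even granting Step 0, you concede that the decisive half of (ii), the upper bound $|\sigma^1_F\cap\bigcup_{j\geq2}\sigma^j_F|\leq 1$, is not established: a factorization that ``should force'' a tree-like incidence structure, and an exchange argument that ``would'' produce a contradiction, are plans, not proofs (your lower bound via the single-cycle permutation $\pi$ is fine, modulo Step 0). For comparison, the paper never needs Step 0 or any global overlap structure; it argues dynamically along the chain. By Lemma \ref{cada ciclo es tocado por el algoritmo} some pair of $\sigma^1$ is formed; from the first such pair $(w_2,f_1)$ onward, the cycle definition (the next worker $w_{i+1}$ is second in the vacated firm $f_i$'s reduced list and still holds her $\mu$-partner) forces the remaining pairs of $\sigma^1$ to be formed in consecutive stages, a domino argument; and the one missing pair $(w_1,f_r)$ can never be formed because $w_1$ had already moved, before the chain entered the cycle, to a firm $f^{t-2}$ with $f^{t-2}P^{\mu}_{w_1}f_r$ (this is where stability of all acceptable pairs and Remark \ref{remark algoritmo}(ii) do the work). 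To complete your route you would need arguments of comparable strength both for Step 0 and for your ``single gateway'' claim; neither is present, so the proposal as it stands does not prove the lemma.
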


Note that, item (i) is equivalent to $|A(P^\mu,w^0)\cap \sigma^k|=|\sigma^k|.$
In the following result, we determine exactly how many pairs are contained in the output set of the algorithm.

\begin{theorem}\label{teorema principal}
Let $ \mu\in S(P)$, $w^0\in W$ such that $\mu(w^0)\neq \mu_{W}(w^0)$, and let $\nu$ be the output of the algorithm applied to the profile $P^{\mu}$ and worker $w^0$. Then, 

    \begin{equation}\label{nombre}
        |A(P^\mu,w^0)|=|\sigma^k|+\sum_{i=1}^{k-1}(|\sigma^i|-1)
    \end{equation}

where $\{\sigma^1,\ldots,\sigma^k\}$ is any sequence of cycles generated from $\mu$ to $\nu$.
\end{theorem}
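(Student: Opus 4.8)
The plan is to count $A(P^\mu,w^0)$ one cycle at a time. By Lemma~\ref{cada par del algoritmo en un ciclo} the sets $A(P^\mu,w^0)\cap\sigma^i$ partition $A(P^\mu,w^0)$, so $|A(P^\mu,w^0)|=\sum_{i=1}^{k}|A(P^\mu,w^0)\cap\sigma^i|$, and Lemma~\ref{cardinalidad del primer ciclo y ultimo ciclo}(i) disposes of the last term through $|A(P^\mu,w^0)\cap\sigma^k|=|\sigma^k|$. Everything then reduces to proving $|A(P^\mu,w^0)\cap\sigma^i|=|\sigma^i|-1$ for each $i<k$. First I would note that this number is intrinsic to the cycle: by Theorem~\ref{asignaciones consec} all sequences generated from $\mu$ to $\nu$ share the same \emph{set} of cycles, while $A(P^\mu,w^0)$ is fixed, so $|A(P^\mu,w^0)\cap\sigma|$ does not depend on where $\sigma$ is placed in a chosen sequence. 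Hence, for any non-last cycle, it suffices to exhibit \emph{one} sequence in which that cycle is first and then apply Lemma~\ref{cardinalidad del primer ciclo y ultimo ciclo}(ii).

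Since only the minimal cycles of the rotation order can be placed first, I would organize the argument as an induction on $k$ that peels off one minimal cycle at a time. The base case $k=1$ is immediate, as Lemmas~\ref{cada par del algoritmo en un ciclo} and~\ref{cardinalidad del primer ciclo y ultimo ciclo}(i) force $A(P^\mu,w^0)=\sigma^1=\sigma^k$. For the inductive step, choose a sequence whose first cycle is $\sigma^1$; Lemma~\ref{cardinalidad del primer ciclo y ultimo ciclo}(ii) gives $|A(P^\mu,w^0)\cap\sigma^1|=|\sigma^1|-1$, and I would pass to the cyclic matching $\mu^1=\mu_{\sigma^1}\in S(P)$ and rerun the algorithm on $(P^{\mu^1},w^0)$. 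A preliminary observation, which I would record separately, makes this legitimate: the algorithm returns the $R_F$-greatest stable matching below $\mu$ at which $w^0$ strictly improves, so $\nu(w^0)$ is $w^0$'s next stable partner above $\mu(w^0)$ and the reduced list of $w^0$ in $P^{\mu,\nu}$ consists only of $\nu(w^0)$ and $\mu(w^0)$; thus $w^0$ is moved by a single rotation, which must be the unique last cycle $\sigma^k$ of Lemma~\ref{el ultimo ciclo de dos suceciones es el mismo}. In particular $w^0\notin\sigma^1$, so $\mu^1(w^0)=\mu(w^0)\neq\mu_W(w^0)$, the input $(P^{\mu^1},w^0)$ is admissible, and by Theorem~\ref{asignaciones consec} the cycles generated from $\mu^1$ to $\nu$ are exactly $\{\sigma^2,\dots,\sigma^k\}$.

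The heart of the proof, and the step I expect to be the main obstacle, is a \emph{consistency} lemma for the two runs: the algorithm on $(P^{\mu^1},w^0)$ must again output $\nu$ and satisfy $A(P^{\mu^1},w^0)=A(P^\mu,w^0)\cap\bigl(\sigma^2\cup\cdots\cup\sigma^k\bigr)$. Granting it, Lemma~\ref{cada par del algoritmo en un ciclo} gives $|A(P^\mu,w^0)|=(|\sigma^1|-1)+|A(P^{\mu^1},w^0)|$, and the induction hypothesis—applied to $(P^{\mu^1},w^0)$, whose $k-1$ cycles are $\{\sigma^2,\dots,\sigma^k\}$ with the same last cycle $\sigma^k$—yields $|A(P^{\mu^1},w^0)|=|\sigma^k|+\sum_{i=2}^{k-1}(|\sigma^i|-1)$, so that Equation~\eqref{nombre} follows at once. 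To prove the consistency lemma I would track the two vacancy chains in parallel. Their inputs $P^\mu$ and $P^{\mu^1}$ differ only on the agents of $\sigma^1$, where $P^{\mu^1}$ merely deletes the single top entry that $\mu$ had assigned to each firm and worker of $\sigma^1$; I would argue that this deletion changes neither the firm vacated first nor any later offer whose pair lies in $\sigma^2,\dots,\sigma^k$, so that the original run effectively ``completes'' $\sigma^1$ before engaging the remaining cycles and the two processes agree on every pair outside $\sigma^1$. Establishing this parallel tracking rigorously—above all, ruling out that the advance elimination of $\sigma^1$ reroutes the chain rather than merely shortening it—is the delicate point on which the whole count rests.
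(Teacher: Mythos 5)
Your reduction of the theorem is correct and, in its architecture, mirrors the paper's own proof: the paper also partitions $A(P^\mu,w^0)$ by cycles via Lemma~\ref{cada par del algoritmo en un ciclo}, disposes of the last cycle via Lemma~\ref{cardinalidad del primer ciclo y ultimo ciclo}(i), and then peels off one cycle at a time by comparing the given run of the algorithm with runs started at adjacent intermediate stable matchings (the paper peels from the back, using the profiles $P^{\mu^{k-2},\nu},P^{\mu^{k-3},\nu},\ldots$ all reduced with respect to $\nu$, after first passing from $P^{\mu}$ to $P^{\mu,\nu}$; you peel from the front using $P^{\mu^{1}}$). The genuine gap is exactly the step you flag and then grant yourself: the ``consistency lemma.'' That claim is not a routine bookkeeping fact; it is the technical heart of the paper, carried by Lemma~\ref{A (P) iguales} (the runs on $P^{\mu}$ and on $P^{\mu,\nu}$ form the same pairs, which is where the all-acceptable-pairs-are-stable hypothesis enters) together with Lemma~\ref{lemma de AP de un reducido subconjunto del AP general} (the run started at an intermediate matching forms a subset of the pairs of the original run). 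The difficulty you yourself identify --- the two vacancy chains are not stage-aligned, because the original run interleaves the $|\sigma^1|-1$ pairs of $\sigma^1$ with the remaining offers, so one must rule out that suppressing $\sigma^1$ reroutes the chain instead of merely shortening it --- is resolved in the paper by a first-divergence argument: one assumes the two runs agree up to some stage and form different pairs for the same vacated firm at the next stage, and derives a contradiction in both directions of the disagreement using Remark~\ref{remark algoritmo}(ii) and the structure of the reduced profiles. Nothing in your sketch substitutes for that argument, so what you have is a correct reduction of the theorem to an unproven claim of essentially the same depth as the theorem itself.

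A second, smaller gap: your ``preliminary observation'' --- that $\nu$ is the $R_F$-greatest stable matching below $\mu$ at which $w^0$ strictly improves, hence that $w^0$'s reduced list in $P^{\mu,\nu}$ contains only $\nu(w^0)$ and $\mu(w^0)$ --- is asserted, not proved, and the paper never establishes such an optimality characterization of $\nu$. What the paper proves is the weaker Lemma~\ref{w0f0 en ultimo ciclo}: the single $w^0$-pair produced by the algorithm lies in the last cycle $\sigma^k$. That weaker statement does suffice for what you need, namely $w^0\notin\sigma^1_W$ when $k>1$ (if some pair $(w^0,f)$ belonged to $\sigma^1$, it cannot be the pair formed by the algorithm, since cycles are disjoint and that pair lies in $\sigma^k$; but then it would be the unique missing pair of $\sigma^1$ by Lemma~\ref{cardinalidad del primer ciclo y ultimo ciclo}(ii), so its predecessor pair in $\sigma^1$ is formed, $f$ is vacated, and at the very next stage $f$'s best willing worker is $w^0$, so the algorithm would form $(w^0,f)$ after all --- a contradiction), but this chain of reasoning has to be supplied; it does not come for free from an unproven optimality property of the output.
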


Thus, the number of steps for a vacancy chain triggered by the resignation of worker $w^0$ needed to reach stability is computed as in Theorem \ref{teorema principal}. The following example illustrates this result.\bigskip

\noindent \textbf{Example 1 (continued)} Recall that $A(P^\mu,w_1)=A=\left\lbrace (w_1,f_3),(w_3,f_4),(w_4,f_2)\right\rbrace $, $\sigma^1=\{(w_4,f_3),(w_3,f_4)\}$, and $\sigma^2=\{(w_4,f_2),(w_1,f_3)\}.$ Then, $|A(P^\mu,w_1)\cap \sigma^1|=1$, and $|A(P^\mu,w_1)\cap \sigma^2|=2.$ Therefore, $|A(P^\mu,w_1)|=3$.\hfill $\Diamond$ 

\section{Final remarks}\label{seccion final remarks}

Given a job matching market in equilibrium, if 
a worker resigns to obtain a better job, we present an algorithm that restores stability to the market. Notice that the  resigning worker triggers a unique vacancy chain, leading to a unique stable matching.  We show that the output of the algorithm not only recovers stability but also weakly improves the situation for all workers, leaving some of them strictly better off together with the one responsible for disrupting the original stability. By using cycles in preferences, we can count the steps necessary to restore such stability.  We finish with some final comments.


A crucial assumption in our main result is that all acceptable pairs are stable pairs. If some acceptable pairs are not stable, then Theorem \ref{teorema principal} is no longer valid. The reason is twofold. On one hand, some stage of the algorithm could select a worker-firm pair that is acceptable yet not stable, and thus such pair would not be part of any cycle in preferences. On the other hand, it could be the case that all the pairs in a cycle of the sequence different than the last one are included in the output of the algorithm, violating Lemma \ref{cardinalidad del primer ciclo y ultimo ciclo} (ii). However, the calculation involving the cardinalities of the cycles obtained in Theorem \ref{teorema principal}  still can be used as a lower bound for the amount of steps necessary to re-stabilize the market. We discuss these facts and present this weaker version of our result in  Appendix \ref{apendiceb}.



Concerning the cost of our algorithm, its operational time is linked to the operational time of computing all cycles in market $P^{\mu,\nu}$. \cite{gusfield1987three} proves, in his Theorem 5, that all cycles can be found in  $O(n^2)$ time, where $n$ is the number of agents on each side of the market. So, this result is also true for our algorithm  as a consequence of our Theorem \ref{teorema principal}. 

There are many papers studying re-stabilization processes in several matching markets. Some of these papers model a destabilization of the market by assuming that new workers arrive at the market, senior workers retire, firms create new positions, and firms downsize, among others. Each of these phenomena triggers either a layoff chain or a vacancy chain that eventually reaches stability \citep[see][among others]{bonifacio2022lattice,bonifacio2024lattice,blum1997vacancy,cantala2004restabilizing,wu2018lattice}. 
Each of these papers shares a common feature: the re-stabilization process occurs within the realm of quasi-stable matchings.\footnote{In short, a quasi-stable matching allows only blocking pairs (if any) where one of the agents of the blocking pair has an empty position. When firms have empty positions, the matching is referred to as \emph{firm-quasi-stable} or \emph{envy-free}. Otherwise, it is termed \emph{worker-quasi-stable}.} A natural question arises of whether the matchings $\nu^t$ are firm-quasi-stable or worker-quasi-stable. The answer is neither of them. Leaving aside $\nu$ that is stable, Lemma  \ref{par bloquenate de nu^t tambien de nu^t-1} (in Appendix \ref{apendice pruebas}) shows that  $(w^0,f^0)$ is a blocking pair for $\nu^t$ at each stage $t$ and $\nu^t(f^0)\neq\emptyset$, implying that $\nu^t$ is not a firm-quasi-stable matching. Moreover, Theorem \ref{nu es estable} (i) assures $W^t\neq \emptyset$ for each stage $t$. This implies that the pair $(w^t,f^{t-1})$ is a blocking pair for $\nu^{t-1}$ and $\nu^{t-1}(w^t)\neq \emptyset$. Then, $\nu^{t-1}$ is not a worker-quasi-stable matching.

\bibliographystyle{ecta}
\bibliography{biblio}

\appendix
\section{Proofs}\label{apendice pruebas}

The following lemma will be used in the proof of Theorem \ref{nu es estable}.
\begin{lemma}\label{par bloquenate de nu^t tambien de nu^t-1}
If $(w,f)$ is a blocking pair for $\nu^{t}$ under $P^{\mu}$, then either $(w,f)=(w^0,f^0)$ or $f=f^{t}$. 
\end{lemma}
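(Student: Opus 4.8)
The plan is to argue by induction on the stage $t$, proving that every blocking pair of $\nu^t$ under $P^\mu$ either equals $(w^0,f^0)$ or has firm-component $f^t$. For the base case $t=0$, recall that $\nu^0$ coincides with $\mu$ except that $f^0$ and $w^0$ are left unmatched, and that $\mu\in S(P^\mu)$ by Remark \ref{remark de reduccion M-M}(ii). Hence a pair $(w,f)$ with $w\neq w^0$ and $f\neq f^0$ blocking $\nu^0$ would already block $\mu$, which is impossible; a pair with $w=w^0$ and $f\neq f^0$ is excluded because $\mu(f)$ is $f$'s top choice under $P^\mu_f$ (Remark \ref{remark de reduccion M-M}(i)), so $w^0P^\mu_f\nu^0(f)=\mu(f)$ cannot hold. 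Thus every blocking pair of $\nu^0$ has $f=f^0=f^t$. For the inductive step I would use that passing from $\nu^{t-1}$ to $\nu^t$ changes the assignment of only the two firms $f^{t-1}$ (now matched to $w^t$) and $f^t$ (now empty), and of only the worker $w^t$ (who moves from $f^t$ to $f^{t-1}$); throughout, $\nu^t$ being defined means the stage continued, so $w^t\neq w^0$.

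Fixing a blocking pair $(w,f)$ of $\nu^t$, I would split on $f$. If $f\notin\{f^{t-1},f^t\}$ then $\nu^t(f)=\nu^{t-1}(f)$: when also $w\neq w^t$ we have $\nu^t(w)=\nu^{t-1}(w)$, so $(w,f)$ blocks $\nu^{t-1}$ and the inductive hypothesis forces $(w,f)=(w^0,f^0)$ (the alternative $f=f^{t-1}$ being excluded); when $w=w^t$, from $\nu^t(w^t)=f^{t-1}$ and $f^{t-1}P^\mu_{w^t}f^t=\nu^{t-1}(w^t)$ (which holds since $w^t\in W^t$) I get $fP^\mu_{w^t}\nu^{t-1}(w^t)$, so $(w^t,f)$ blocks $\nu^{t-1}$, and the inductive hypothesis gives a contradiction because $w^t\neq w^0$ and $f\neq f^{t-1}$. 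The case $f=f^t$ is exactly the allowed conclusion, so nothing is required there. If $f=f^{t-1}$ and $w\neq w^0$, the firm-side condition reads $wP^\mu_{f^{t-1}}w^t$; since $w^t$ is $R^\mu_{f^{t-1}}$-maximal in $W^t$, this forces $w\notin W^t$ and $w\neq w^t$, i.e. $\nu^{t-1}(w)R^\mu_wf^{t-1}$, contradicting the worker-side condition $f^{t-1}P^\mu_w\nu^t(w)=f^{t-1}P^\mu_w\nu^{t-1}(w)$.

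The main obstacle, and the case I expect to require the sharpest argument, is $f=f^{t-1}$ with $w=w^0$. If $f^{t-1}=f^0$ the pair is $(w^0,f^0)$ and we are done, so assume $f^{t-1}\neq f^0$; then the pair is neither $(w^0,f^0)$ nor covered by the $f=f^t$ alternative and must be shown not to block. The firm-side condition is $w^0P^\mu_{f^{t-1}}w^t$. If $w^0\in W^t$, then $R^\mu_{f^{t-1}}$-maximality of $w^t$ together with $w^t\neq w^0$ yields $w^tP^\mu_{f^{t-1}}w^0$, a contradiction. If $w^0\notin W^t$, the decisive observation is that the willingness test in the algorithm compares $f^{t-1}$ against $\mu(w^0)=f^0$ (not against $\nu^{t-1}(w^0)=\emptyset$), so it gives $f^0R^\mu_{w^0}f^{t-1}$; but by Remark \ref{remark de reduccion M-M}(i) the firm $f^0=\mu(w^0)$ is the \emph{least}-preferred acceptable firm of $w^0$ under $P^\mu$, so with $f^{t-1}\neq f^0$ this forces $f^{t-1}$ to be unacceptable to $w^0$, whereupon the worker-side condition $f^{t-1}P^\mu_{w^0}\nu^t(w^0)=\emptyset$ fails. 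In either subcase no blocking pair arises, which closes the induction.
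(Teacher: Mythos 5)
Your proof is correct, but it takes a genuinely different route from the paper's. The paper argues directly at a fixed stage $t$: for a blocking firm $f\notin\{f^{t-1},f^t\}$ it either notes $\nu^{t-1}(f)=\mu(f)$ and invokes Remark \ref{remark de reduccion M-M} (i), or traces back to the stage $t'$ at which $f$ acquired its current partner, derives $w\notin W^{t'}$, and then propagates the contradiction forward to stage $t$ using the multi-stage monotonicity of Remark \ref{remark algoritmo} (ii). You replace this trace-back by induction on $t$: the base case $t=0$ reduces to stability of $\mu$ under $P^{\mu}$, and the inductive step only ever compares $\nu^{t}$ with $\nu^{t-1}$, so the induction hypothesis absorbs the role of monotonicity and no appeal to Remark \ref{remark algoritmo} (ii) across several stages is needed. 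The treatments of $f=f^{t-1}$ with $w\neq w^0$ coincide, but your handling of the delicate subcase $w=w^0$, $f^{t-1}\neq f^0$ is more explicit than the paper's: the paper asserts $w^{t}P^{\mu}_{f^{t-1}}w^{0}$ from ``$t$ is not the last stage,'' which tacitly presumes $w^{0}\in W^{t}$; you supply the missing justification, namely that if $w^{0}\notin W^{t}$ then the algorithm's test (which compares $f^{t-1}$ against $\mu(w^0)=f^0$, the \emph{least}-preferred acceptable firm of $w^0$ under $P^{\mu}$ by Remark \ref{remark de reduccion M-M} (i)) forces $f^{t-1}$ to be unacceptable to $w^0$, so the worker-side blocking condition fails. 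In short, the paper's version is shorter once its Remark \ref{remark algoritmo} (ii) is available, while your induction yields a self-contained, locally checkable argument that also patches a step the paper leaves implicit.
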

\begin{proof}
    Let $(w,f)$ a blocking pair for  $\nu^{t}$ under $P^{\mu}$. Assume that  $(w,f)\neq (w^0,f^0)$, and $f\neq f^{t}$. 
  There are two cases to consider:

  \noindent \textbf{Case 1: $\boldsymbol{f=f^{t-1}.}$} The fact that $(w,f^{t-1})$ is a blocking pair for  $\nu^{t}$ and definition of $\nu^{t}$ imply that  
    \begin{equation}\label{ecu -2 Lemma par bloqueante de vt-1}
        wP^{\mu}_{f^{t-1}}\nu^{t}(f^{t-1})=w^{t}  
    \end{equation} 
    and 
    \begin{equation}\label{ecu -2 Lemma par bloqueante de vt-2}
        f^{t-1}P^{\mu}_{w}\nu^{t}(w).  
    \end{equation} 
    By \eqref{ecu -2 Lemma par bloqueante de vt-1} and definition of $W^t$ we have   $w\notin W^t$. Assume that $w=w^0$. Since $(w^0,f^{t-1})$ blocks $\nu^{t}$, $ w^0P^{\mu}_{f^{t-1}}\nu^{t}(f^{t-1})=w^{t}$. The fact that $t$ is not the last stage of the algorithm implies that, $w^tP^{\mu}_{f^{t-1}}w^0$. Thus, $w^tP^{\mu}_{f^{t-1}}w^0P^{\mu}_{f^{t-1}}w^t$ and we have a contradiction. Therefore $w\neq w^0$. Hence, Remark \ref{remark algoritmo} (ii)  implies that,  $\nu^t(w)R^{\mu}_w\nu^{t-1}(w)$, and the fact that $w\notin W^t$ implies that $\nu^t(w)=\nu^{t-1}(w)P^{\mu}_w{f^{t-1}}$. Therefore,  $\nu^t(w)P^{\mu}_w{f^{t-1}}$, contradicting \eqref{ecu -2 Lemma par bloqueante de vt-2}.
     
     \noindent \textbf{Case 2: $\boldsymbol{f\neq f^{t-1}.}$} By definition of $\nu^t$, $\nu^t(f)=\nu^{t-1}(f)$ together with the fact that $(w,f)$ is a blocking pair for  $\nu^{t}$ imply that
    \begin{equation}\label{ecu 0 Lemma par bloqueante de vt-1}
     wP^{\mu}_{f}\nu^{t}(f)=\nu^{t-1}(f)
    \end{equation}
    and 
    \begin{equation}\label{ecu 0 Lemma par bloqueante de vt-2}
      fP^{\mu}_{w}\nu^{t}(w).
    \end{equation}
    
If $\nu^{t-1}(f)=\mu(f)$, then by
\eqref{ecu 0 Lemma par bloqueante de vt-1}, $wP^{\mu}_{f}\mu(f)$ and this contradicts Remark \ref{remark de reduccion M-M} (i).  
Thus, there is a stage $t'<t-1$ of the algorithm such that $f^{t'}=f$, $\nu^{t'-1}(f)=\emptyset$ and $\nu^{t'}(f)=w^{t'}=\nu^{t-1}(f).$   This together with \eqref{ecu 0 Lemma par bloqueante de vt-1} imply that $wP^{\mu}_{f}w^{t'}$. Hence $w\notin W^{t'}$. Assume that $w=w^0$. Since $(w^0,f)$ blocks $\nu^{t}$ under $P^{\mu}$, $ w^0P^{\mu}_{f}\nu^{t}(f)=w^{t'}$. The fact that $t'$ is not the last stage of the algorithm implies that, $w^{t'}P^{\mu}_fw^0$. Thus, $w^{t'}P^{\mu}_fw^0P^{\mu}_fw^{t'}$ and we have a contradiction. Therefore, $w\neq w^0$. This implies that $\nu^{t'}(w)P^{\mu}_wf$ and, by Remark \ref{remark algoritmo} (ii), we have that $\nu^{t}(w)R^{\mu}_w\nu^{t'}(w)P^{\mu}_wf$. This last fact contradicts \eqref{ecu 0 Lemma par bloqueante de vt-2}.

Therefore,  either $(w,f)=(w^0,f^0)$ or $f=f^{t}.$
\end{proof}

\noindent \begin{proof}[Proof of Theorem \ref{nu es estable}]
\begin{enumerate}[(i)]
    \item  We proceed by induction.
First, we prove that  $W^1 \neq \emptyset.$ Since $w^0= \mu(f^0)\neq \mu_W(f^0)$, by the Single Agent Theorem\footnote{The Single Agent Theorem states that if an agent is single in a stable matching, then it is single in all stable matchings \citep{mcvitie1970stable}.}, there is $\overline{w}\in W\setminus \{w^0\}$ such that $\overline{w}=\mu_W(f^0).$ By the optimality of $\mu_W$ and the definition of $\nu^0$, $f^0=\mu_W(\overline{w})P_{\overline{w}}\mu(\overline{w})=\nu^0(\overline{w}).$ Thus, $f^0 P_{\overline{w}}\nu^0(\overline{w})$. By the preference reduction procedure, $f^0 P^{\mu}_{\overline{w}}\nu^0(\overline{w})$. Therefore $\overline{w}\in W^1$. Hence, $W^1\neq \emptyset.$ 

Next, assume that $W^{\widehat{t}} \neq \emptyset$ for each $\widehat{t}<t$. 
Thus, $W^{t-1}\neq \emptyset$. Since $w^{t-1}\neq w^0$ (since $t-1$ is not the last stage of the algorithm) we have that $w^{t-1}=\nu^{t-2}(f^{t-1})$. This implies that $\mu(f^{t-1})\neq \emptyset.$ Let  $w^{\star}\in W$ be such that $w^{\star}=\mu(f^{t-1}).$ By the working of the algorithm, $\nu^{t-1}(f^{t-1})=\emptyset.$ Thus, there is a stage of the algorithm in which ${w^{\star}}$ is no longer matched to $f^{t-1}.$ Let $j\in \{1,\ldots, t-1\}$ be the lowest index such that $\nu^j(f^{t-1})\neq \mu(f^{t-1}).$ Then,  $w^{\star} = w^j$. Hence, $f^{j-1} P^{\mu}_{w^j} \nu^{j-1}(w^j)=f^{t-1}.$ By the reduction procedure, we have that $\mu_W(w^{j})R^{\mu}_{w^j}f^{j-1}P^{\mu}_{w^j}f^{t-1}.$ Then,  $\mu_W(w^j)\neq \mu(w^j).$ Thus, as $\mu(w^j)=f^{t-1}$, $\mu_W(f^{t-1})\neq \mu(f^{t-1}).$ 

Let $\overline{w}\in W$ be such that $\overline{w}=\mu_W(f^{t-1}).$
First, assume that $\overline{w}=w^0.$ Therefore,
$$f^{t-1}=\mu_W(w^0)P^{\mu}_{w^0}\mu(w^0)=f^0$$ and the algorithm includes $w^0$ in $W^t$.  Second, assume that $\overline{w}\neq w^0.$ By the reduction procedure  we have that 
$f^{t-1}=\mu_W(\overline{w})R^{\mu}_{\overline{w}}\nu^{t-1}(\overline{w}).$ Since $\nu^{t-1}(f^{t-1})=\emptyset,$ then $f^{t-1}P^{\mu}_{\overline{w}}\nu^{t-1}(\overline{w})$ and $\overline{w}\in W^t.$
We conclude that $W^t\neq \emptyset.$

\item Assume that the algorithm does not stop in a finite number of stages.  Then, by the finiteness of the model, there are two stages of the algorithm $t$ and $t'$ with $t<t'$, a worker $w\in W$ such that $w=w^t=w^{t'}\neq w^0$, and a firm $f\in F$ such that $f=f^{t-1}=f^{t'-1}$ that satisfy 
\begin{equation}\label{ecu lema parate 1}
  \nu^{t-1}(f)=\emptyset \text{ and }\nu^{t}(f)=w,
\end{equation}
together with 
\begin{equation}\label{ecu lema parate 2}
   \nu^{t'-1}(f)=\emptyset \text{ and }\nu^{t'}(f)=w.
\end{equation}
 By \eqref{ecu lema parate 1} and \eqref{ecu lema parate 2}, we have that $\nu^{t}(w)=f$, and $\nu^{t'-1}(w)\neq f$, so $t<t'-1$. Then, by Remark \ref{remark algoritmo} (ii), $\nu^{t'-1}(w)P^{\mu}_w \nu^{t}(w)=f.$ Hence, $$f=\nu^{t'}(w)P^{\mu}_w \nu^{t'-1}(w) P^{\mu}_w \nu^t(w)=f.$$ Thus, $fP^{\mu}_wf,$ a contradiction. 
 Therefore, the algorithm stops in a finite number of stages.
 
\item 
Let $\nu$ be the matching obtained by the algorithm.  Let $T$ be the stage in which the algorithm stops and let $\nu^0,\ldots,\nu^{T-1},\nu$ and $f^0,\ldots,f^T$ the sequences of matchings and firms involved in the pairs defined by the algorithm, respectively.  By Remark \ref{remark algoritmo} (i), the pair $(w^0,f^0)$ is blocking pair for $\nu^{t}$ under $P^{\mu}$ with $t=1,\ldots,T-1$. By Lemma \ref{par bloquenate de nu^t tambien de nu^t-1}, the firms involved in  blocking pairs for $\nu^{t}$ are $f^{0}$ (only with worker $w^0$) and $f^{t}$, for $t=1,\ldots,T-1$.   By definition of $\nu$, it is only left to prove that neither $f^{0}$ nor $f^{T-1}$ can be members of a blocking pair under $P^{\mu}$. First, by definition of $\nu$ and $w^{0}\in W^T$, $\nu(w_0)P^\mu_{w^0} \mu(w^{0})=f^{0}$ implying that $(w^{0},f^0)$ is not a blocking pair for $\nu$ under $P^{\mu}$. Lastly, since $w^{0}\in W^T$ we have that $\nu(f^{T-1})=w^{0}P^{\mu}_{f^{T-1}}w$ for each $w\in W^T.$
 Therefore, $\nu$ is a stable matching under $P^{\mu}$.
 \end{enumerate}
 \end{proof}

\begin{lemma}\label{w0f0 en ultimo ciclo}
   Let $\sigma^k$ be the last cycle of any sequence of cycles generated from $\mu$ to $\nu$. Let $T$ be the stage of the algorithm such that $(w^0,f^{T-1})\in A^T(P^\mu,w^0)
   $. Then, $(w^0,f^{T-1})\in \sigma^{k}.$
\end{lemma}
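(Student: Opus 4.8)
The plan is to first translate the membership of $(w^0,f^{T-1})$ in $\sigma^k$ into a statement about the stable matching $\mu^{k-1}$ immediately preceding $\nu$, and then to rule out the unwanted case using a minimality property of the output $\nu$ that I would prove by induction along the stages of the algorithm. Fix a sequence $\{\mu^0,\ldots,\mu^k\}$ with $\mu^0=\mu$, $\mu^k=\nu$ and $\mu^i=\mu^{i-1}_{\sigma^i}$. Since $\mu^{k-1}_{\sigma^k}=\nu$ and a cyclic matching only alters the partners of the firms in $\sigma^k_F$, every pair of $\sigma^k$ is of the form $(\nu(f),f)$ with $f\in\sigma^k_F=\{f\in F:\mu^{k-1}(f)\neq\nu(f)\}$. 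The terminal stage $T$ of the algorithm gives $\nu(f^{T-1})=w^0$, whence
\[
(w^0,f^{T-1})\in\sigma^k \iff f^{T-1}\in\sigma^k_F \iff \mu^{k-1}(w^0)\neq\nu(w^0)=f^{T-1}.
\]
When $k=1$ this is immediate, since $\mu^{k-1}=\mu$ and $\mu(w^0)=f^0\neq f^{T-1}$ (the membership $w^0\in W^T$ forces $f^{T-1}P^{\mu}_{w^0}\mu(w^0)$, so $f^{T-1}\neq f^0$). Thus the whole content is to exclude $\mu^{k-1}(w^0)=f^{T-1}$.

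To this end I would prove and use the following \emph{minimality property} of $\nu$: for every $\mu'\in S(P^{\mu})$ with $\mu'(w^0)\neq f^0$ one has $\mu'\,R^{\mu}_W\,\nu$. Granting it, suppose toward a contradiction that $\mu^{k-1}(w^0)=f^{T-1}$. By Proposition \ref{establePMUsiysolosireducido} we have $\mu^{k-1}\in S(P^{\mu})$, and $\mu^{k-1}(w^0)=f^{T-1}\neq f^0$, so the property yields $\mu^{k-1}\,R^{\mu}_W\,\nu$. On the other hand $\mu^{k-1}P^{\mu}_F\nu$ (the single step $\sigma^k$ worsens firms, so $\mu^{k-1}\neq\nu$), which is equivalent to $\nu\,P^{\mu}_W\,\mu^{k-1}$. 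These two relations force $\mu^{k-1}=\nu$, a contradiction. Hence $\mu^{k-1}(w^0)\neq f^{T-1}$ and $(w^0,f^{T-1})\in\sigma^k$.

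The crux is the minimality property, which I would obtain from the stronger claim that $\mu'(w)\,R^{\mu}_w\,\nu^{t}(w)$ for every $w\in W$ and every stage $t$, where $\nu^0,\nu^1,\ldots,\nu$ are the intermediate matchings produced by the algorithm. The base case $t=0$ holds because $\mu'\,R^{\mu}_W\,\mu$ (as $\mu'\in S(P^{\mu})$) and $\nu^0(w)\in\{\mu(w),\emptyset\}$. At stage $t$ only $w^t$ changes partner, moving to $f^{t-1}$, so it suffices to verify $\mu'(w^t)\,R^{\mu}_{w^t}\,f^{t-1}$. If this failed, then $f^{t-1}P^{\mu}_{w^t}\mu'(w^t)$, and stability of $\mu'$ would produce $w^{\star}=\mu'(f^{t-1})$ with $w^{\star}P^{\mu}_{f^{t-1}}w^t$. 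Because $w^t$ is the $R^{\mu}_{f^{t-1}}$-best element of $W^t$ and $w^{\star}\neq w^0$, this forces $w^{\star}\notin W^t$, hence $\nu^{t-1}(w^{\star})\,R^{\mu}_{w^{\star}}\,f^{t-1}$; combined with the inductive hypothesis $\mu'(w^{\star})\,R^{\mu}_{w^{\star}}\,\nu^{t-1}(w^{\star})$ and $\mu'(w^{\star})=f^{t-1}$, this gives $\nu^{t-1}(w^{\star})=f^{t-1}$, i.e.\ $\nu^{t-1}(f^{t-1})=w^{\star}\neq\emptyset$. This contradicts the invariant that $f^{t-1}$ is left vacant at the previous stage, so $\nu^{t-1}(f^{t-1})=\emptyset$. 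Evaluating the claim at the final stage yields $\mu'\,R^{\mu}_W\,\nu$.

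I expect the main obstacle to be precisely this inductive step: playing the algorithmic invariant ``the currently offering firm is vacant'' against the no-blocking-pair condition of an arbitrary improving matching $\mu'$, and in particular the careful treatment of the terminal stage, where $w^0$ joins $W^T$ and one must check $w^{\star}\neq w^0$ (which holds since $w^{\star}P^{\mu}_{f^{T-1}}w^0$). The remaining pieces—the combinatorial reduction of the first paragraph and the lattice contradiction of the second—are routine consequences of the definitions and Proposition \ref{establePMUsiysolosireducido}.
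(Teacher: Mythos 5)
Your route is genuinely different from the paper's. The paper argues by contradiction directly on the algorithm's trace: it considers the last firm of $\sigma^k_F$ to be filled with its $\nu$-partner, shows the worker who fills it cannot be $w^0$ and must therefore have abandoned a firm outside $\sigma^k_F$, and from this exhibits a pair blocking the stable matching $\mu^{k-1}$. You instead reduce the lemma to the single statement $\mu^{k-1}(w^0)\neq\nu(w^0)$ and derive it from a worker-pessimality property of $\nu$ among all $\mu'\in S(P^{\mu})$ with $\mu'(w^0)\neq f^0$, closing the argument with the opposition-of-interests duality between $R_F$ and $R_W$ on stable matchings. Your first-paragraph reduction (pairs of $\sigma^k$ are exactly $(\nu(f),f)$ for $f\in\sigma^k_F=\{f:\mu^{k-1}(f)\neq\nu(f)\}$) and the lattice step are both correct, and the minimality property is a stronger structural statement than the lemma itself --- a worthwhile byproduct the paper never states explicitly.

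There is, however, one concrete gap in the inductive step of your minimality claim. You assert $w^{\star}\neq w^0$ but justify it only at the terminal stage, where $w^t=w^0$ and $w^{\star}P^{\mu}_{f^{T-1}}w^t$ gives it for free. At a non-terminal stage $t<T$ nothing you wrote excludes $w^{\star}=\mu'(f^{t-1})=w^0$, and the case matters: if $w^{\star}=w^0$, the implication ``$w^{\star}\notin W^t$ hence $\nu^{t-1}(w^{\star})\,R^{\mu}_{w^{\star}}\,f^{t-1}$'' breaks down, because $w^0$'s membership in $W^t$ is governed by the separate test $f^{t-1}P^{\mu}_{w^0}\mu(w^0)$ rather than by comparison with $\nu^{t-1}(w^0)$, which equals $\emptyset$ throughout. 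The case can be closed, but it needs its own argument: if $w^{\star}=w^0$ then $\mu'(w^0)=f^{t-1}$, which is acceptable to $w^0$ under $P^{\mu}$ by individual rationality of $\mu'$; since $\mu'(w^0)\neq f^0$ and $f^0=\mu(w^0)$ is the \emph{last} entry of $P^{\mu}_{w^0}$ (Remark \ref{remark de reduccion M-M} (i)), this forces $f^{t-1}P^{\mu}_{w^0}\mu(w^0)$, so the algorithm places $w^0$ in $W^t$; then $w^{\star}=w^0\in W^t$ together with $w^{\star}P^{\mu}_{f^{t-1}}w^t$ contradicts the choice of $w^t$ as the $R^{\mu}_{f^{t-1}}$-best element of $W^t$. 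With this case added, your induction goes through and the whole proof is sound.
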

\begin{proof}
   Let $\sigma^k$ be the last cycle of any sequence of cycles generated from $\mu$ to $\nu$. Let $T$ be the stage of the algorithm such that $(w^0,f^{T-1})\in A^T(P^\mu,w^0)
   $.\footnote{Note that, by the stopping criterion of the algorithm, $T$ is the last stage.} Assume that  $(w^0,f^{T-1})\notin \sigma^{k}.$ Let $t$ be a stage of the algorithm such that $(f,w)\in\sigma^k$, $f$ is the last firm of $\sigma^k_F$ that is unmatched at stage $t-1$ and matched to $w$  at stage $t$. This means  that $\nu^{t-1}(f)=\emptyset$, $\nu^{t}(f)=w=\nu(f)=\mu^k(f)$ and for each $f'\in\sigma^k_F\setminus ¨{f}$, $\nu^{t-1}(f')=\nu(f')=\mu^k(f')$. Assume that $w=w^0 $, then $w^0=\mu^k(f)=\nu(f).$ Hence, by the output of the algorithm, $f=f^{T-1}$. Thus $(w^0,f^{T-1})\in\sigma^k$ and this is a contradiction. Therefore, $w\neq w^0$ and there is $\overline{f}\in F\setminus\sigma^k_F$ such that $\nu^{t-1}(\overline{f})=w$, and $\nu^{t}(\overline{f})=\emptyset.$ Hence,  $wP^{\mu}_{\overline{f}}\nu(\overline{f}).$ Since $\mu^{k}=\nu$, and  $\overline{f}\notin \sigma^k_F$, then 
    \begin{equation}\label{ecu 1 prop w_0 en ultimo ciclo}
        wP^{\mu}_{\overline{f}}\nu(\overline{f})=\mu^{k}(\overline{f})=\mu^{k-1}(\overline{f}).
    \end{equation}
Since $f= \nu^{t}(w)$ and $\overline{f}=\nu^{t-1}(w)$, by Remark \ref{remark algoritmo} (ii), we have that $fP^{\mu}_w\overline{f}.$ Since $w\in \sigma^{k}_W$, then there is
\begin{equation}\label{ecu 2 prop w_0 en ultimo ciclo}
   \widetilde{f}=\mu^{k-1}(w).
\end{equation}
Let $\widetilde{w}=\mu^{k}(\widetilde{f}).$ Thus,  $w=\mu^{k-1}(\widetilde{f})$, and $\widetilde{w}=\mu^{k}(\widetilde{f})=\nu(\widetilde{f})$. Hence, 
\begin{equation}\label{ecu 3 prop w_0 en ultimo ciclo}
    w=\mu^{k-1}(\widetilde{f})P^{\mu}_{\widetilde{f}}\mu^{k}(\widetilde{f})=\widetilde{w}.
\end{equation}
Then, there is a stage of the algorithm, say $\widetilde{t}$, such that  
$\nu^{\widetilde{t}-1}(\widetilde{f})=\emptyset$, and $\nu^{\widetilde{t}}(\widetilde{f})=\widetilde{w}$, i.e $(\widetilde{w},\widetilde{f}) \in A^{\widetilde{t}}(P^\mu,w^0).$
By \eqref{ecu 3 prop w_0 en ultimo ciclo}, and the fact that $\nu^{\widetilde{t}}(\widetilde{f})=\widetilde{w},$ we have that $w\notin W^{\widetilde{t}}$. Then, $\nu^{\widetilde{t}-1}(w)P^{\mu}_w\widetilde{f}.$ By definition of $t$, and the fact that $\widetilde{f}\in\sigma^{k}_F$, we have that $\widetilde{t}\leq t-1.$ This implies, by Remark \ref{remark algoritmo} (ii), that
\begin{equation}\label{ecu 4 prop w_0 en ultimo ciclo}
    \overline{f}= \nu^{t-1}(w)R^{\mu}_w \nu^{\widetilde{t}}(w)=\nu^{\widetilde{t}-1}(w)P^{\mu}_w \widetilde{f}.
\end{equation}
By \eqref{ecu 2 prop w_0 en ultimo ciclo} and \eqref{ecu 4 prop w_0 en ultimo ciclo}, we have that
\begin{equation}\label{ecu 5 prop w_0 en ultimo ciclo}
    \overline{f}P^{\mu}_w\widetilde{f}=\mu^{k-1}(w).
\end{equation}
By \eqref{ecu 1 prop w_0 en ultimo ciclo} and \eqref{ecu 5 prop w_0 en ultimo ciclo}, we have that the pair $({w,\overline{f}})$ blocks stable matching $\mu^{k-1}$, a contradiction. Therefore,  $(w^0,f^{T-1})\in \sigma^{k}.$
\end{proof}

\bigskip

\noindent \begin{proof}[Proof of Lemma \ref{el ultimo ciclo de dos suceciones es el mismo}]
   Assume that there are two different sequences of cycles generated from $\mu$ to $\nu.$ Let $\sigma^k$ and $\sigma^{k'} $ be the last cycle of each sequence. Let $T$ be the last stage of the algorithm. By Lemma \ref{w0f0 en ultimo ciclo}, we have that $(w^0,f^{T-1})\in \sigma^k \cap \sigma^{k'}$. Consider the reduced preference profiles $P^{\mu^{k-1}}$ and $P^{\mu^{k'-1}}$ from which  cycles  $\sigma^k$ and $\sigma^{k'} $ are computed. Since  $(w^0,f^{T-1})\in \sigma^k \cap \sigma^{k'}$, we have that, by definition of cycles, the second worker of both $P^{\mu^{k-1}}_{f}$ and $P_f^{\mu^{k'-1}}$ is $\mu^{k-1}_{\sigma^k}(f)=\mu^{k'-1}_{\sigma^{k'}}(f)=\nu(f)$ for each $f\in \sigma^k\cup \sigma^{k'}.$ Therefore,  $\sigma^k= \sigma^{k'}.$
\end{proof}

\bigskip

\noindent\begin{proof}[Proof of Lemma \ref{cada par del algoritmo en un ciclo}]
    Let $(w^{t},f^{t-1})\in A^t(P^{\mu},w^0).$ Since all pairs formed by the algorithm are acceptable under $P^\mu$, and given that we assume that all acceptable pairs under $P^\mu$ are stable under $P^\mu$, by Theorem 4 in \cite{gusfield1987three},\footnote{In a model in which all acceptable pairs are stable, this theorem says that if $(w,f)$ is a stable pair such that $w\neq \mu_W(f)$, then $(w,f)$ belongs to exactly one cycle. This theorem was originally presented in \cite{irving1986complexity}.} $(w^{t},f^{t-1})$ belongs to exactly one cycle. Assume that this cycle is not in the sequence. Then, $\nu(f^{t-1})P^\mu_{f^{t-1}}w^t$. Let $w^\star=\nu(f^{t-1})$. By definition of $W^{t}$, $w^\star \notin W^t$. Therefore, $\nu^{t-1}(w^\star)P^\mu_{w^\star}f^{t-1}=\nu(w^\star)$. This contradicts Remark \ref{remark algoritmo} (ii), implying that the cycle belongs to the sequence.   
\end{proof}

\begin{lemma}\label{cada ciclo es tocado por el algoritmo}
    Let $\{\sigma^1,\ldots,\sigma^k\}$ be a sequence of cycles generated from $\mu$ to $\nu$ . Then for each $\sigma^j$, there is a stage of the algorithm, say $t$, such that $(w^t,f^{t-1}) \in A^t(P^\mu,w^0)$ and $(w^t,f^{t-1})\in\sigma^j$.
    
\end{lemma}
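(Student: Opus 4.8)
The plan is to argue by contradiction: suppose some cycle $\sigma^j$ of the sequence is never touched, i.e. $\sigma^j\cap A(P^\mu,w^0)=\emptyset$, and derive a pair that blocks one of the stable matchings lying between $\mu$ and $\nu$. First I would fix the setup and reduce the problem. Since by Theorem \ref{asignaciones consec} every sequence from $\mu$ to $\nu$ consists of the same cycles, ``being touched'' depends only on $\sigma^j$ and the fixed output $A(P^\mu,w^0)$, so I may work with one convenient sequence $\{\sigma^1,\dots,\sigma^k\}$ and one index $j$. Each intermediate matching $\mu^i$ satisfies $\mu\,R_F\,\mu^i\,R_F\,\nu\,R_F\,\mu_W$, so by Proposition \ref{establePMUsiysolosireducido} (with $\widetilde\mu=\mu_W$) we have $\mu^i\in S(P^\mu)$; hence exhibiting any pair that blocks some $\mu^i$ under $P^\mu$ gives the contradiction. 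I would lean throughout on the monotonicity $\nu(w)\,R^\mu_w\,\nu^{t}(w)\,R^\mu_w\,\nu^{t-1}(w)$ of Remark \ref{remark algoritmo}(ii), on the fact (Remark \ref{remark algoritmo}(iii)) that every agent whose $\mu$-match differs from its $\nu$-match is processed, and on Lemma \ref{cada par del algoritmo en un ciclo}, which guarantees that each pair the algorithm forms lies in exactly one cycle.

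The next step is to show that, under the assumption, every firm and every worker of $\sigma^j$ occurs in a strictly later cycle. Indeed, for each firm $f$ with $\mu(f)\neq\nu(f)$ the last assignment the algorithm makes to $f$ produces the pair $(\nu(f),f)$, which belongs to $f$'s last cycle; were $\sigma^j$ that last cycle, this pair would already touch $\sigma^j$. The symmetric statement for the pairs $(w,\nu(w))$ handles workers. Consequently, for any pair $(w^\ast,f^\ast)\in\sigma^j$, where $w^\ast=\mu^j(f^\ast)=\mu^{j-1}(f^+)$ with $f^+=\mu^{j-1}(w^\ast)$, we obtain the strict comparisons $\nu(w^\ast)\,P^\mu_{w^\ast}\,f^\ast$ and $w^\ast\,P^\mu_{f^\ast}\,\nu(f^\ast)$, while the untouched hypothesis means $w^\ast$ and $f^\ast$ are never matched to each other during the run.

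Then I would isolate the decisive stage. Because each element of $\sigma^j$ must eventually pass its $\sigma^j$-partner (by the strict comparisons above) yet is never assigned it, there is an earliest stage $s$ that ``jumps over'' $\sigma^j$: either some $w\in\sigma^j_W$ is hired at $s$ by a firm it strictly prefers to $\mu^j(w)$, or some $f\in\sigma^j_F$ hires at $s$ a worker strictly worse for it than $\mu^j(f)$. Minimality of $s$ controls the companion element: in the worker case, the unique firm-side vacancy at stage $s-1$ is the hiring firm, so the $\sigma^j$-firm $f^\ast=\mu^j(w^\ast)$ is matched at $s-1$, and since it has not yet jumped and is never matched to $w^\ast$, its current worker is strictly better for it than $w^\ast$. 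From this configuration, tracing the source firm $\overline f$ from which $w^\ast$ is hired at stage $s$ and using Remark \ref{remark algoritmo}(ii), I would read off a worker together with a firm lying outside $\sigma^j_F$ whose induced strict preferences block the stable matching $\mu^{j-1}$, exactly as the source firm of a moving worker is exploited in the proof of Lemma \ref{w0f0 en ultimo ciclo}.

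The hard part will be precisely this last assembly for a cycle that is \emph{not} the last one. In Lemma \ref{w0f0 en ultimo ciclo} the moving worker reaches its final firm and every firm outside the last cycle satisfies $\nu(\overline f)=\mu^{k-1}(\overline f)$, which makes the source firm's strict preference immediate; for a general $\sigma^j$ a firm outside $\sigma^j$ may still move in later cycles, so this shortcut is unavailable and $\nu(\overline f)\neq\mu^{j-1}(\overline f)$ in general. The real work is therefore to use the minimality of $s$ together with the monotonicity of Remark \ref{remark algoritmo}(ii) to pin down the assignment of $\overline f$ at stage $s-1$ relative to $\mu^{j-1}(\overline f)$, and to verify $\overline f\notin\sigma^j_F$, so that both strict preferences defining the blocking pair for $\mu^{j-1}$ hold. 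I expect this bookkeeping, rather than any new conceptual ingredient, to be the crux; the worker/firm symmetry of the reduction procedure (Remark \ref{remark de reduccion M-M}) should let the firm-jump case be handled dually.
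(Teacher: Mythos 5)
Your setup (the contradiction hypothesis, the strict comparisons $\nu(w^\ast)\,P^\mu_{w^\ast}\,f^\ast$ and $w^\ast\,P^\mu_{f^\ast}\,\nu(f^\ast)$, and the existence of an earliest ``jump'' stage $s$) is sound, but the step you defer as ``bookkeeping'' is exactly where the argument breaks, and it is not bookkeeping: the blocking pair you hope to read off at stage $s$ provably does not exist among the candidates you name. Suppose $w^\ast\in\sigma^j_W$ jumps at stage $s$, i.e.\ it is hired by $h=f^{s-1}$ with $h\,P^\mu_{w^\ast}\,\mu^j(w^\ast)$. The pair $(w^\ast,h)$ is formed by the algorithm, so by Lemma \ref{cada par del algoritmo en un ciclo} it lies in some cycle $\sigma^i$ of the sequence, i.e.\ $h=\mu^i(w^\ast)$; since $h\,P^\mu_{w^\ast}\,\mu^j(w^\ast)$ and workers improve along the sequence, $i>j$. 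But firms' partners deteriorate along $\mu^0,\dots,\mu^k$, so $\mu^{j-1}(h)\,R^\mu_h\,\mu^{i-1}(h)\,P^\mu_h\,\mu^i(h)=w^\ast$: firm $h$ strictly prefers its $\mu^{j-1}$-partner to $w^\ast$, hence $(w^\ast,h)$ does not block $\mu^{j-1}$. The source firm fares no better: $\overline f=\nu^{s-1}(w^\ast)$ is (again by Lemma \ref{cada par del algoritmo en un ciclo}, plus minimality of $s$) either $\mu(w^\ast)$ or $\mu^{i'}(w^\ast)$ for a cycle $i'<j$ containing $w^\ast$, whence $\mu^{j-1}(w^\ast)\,R^\mu_{w^\ast}\,\overline f$ and $(w^\ast,\overline f)$ does not block $\mu^{j-1}$ either; the firm-jump case is obstructed dually. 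So the difficulty you yourself flag --- that $\nu(\overline f)\neq\mu^{j-1}(\overline f)$ for interior cycles --- is fatal to transplanting the proof of Lemma \ref{w0f0 en ultimo ciclo}, and a genuinely different idea is required for $j\neq k$.

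The paper supplies that idea by abandoning blocking pairs altogether for interior cycles. It takes $t'$ to be the \emph{first} stage at which some firm $f\in\sigma^j_F$ is vacated. Untouchedness forces the worker hired by $f$ at stage $t'+1$ to differ from $w=\mu^j(f)$, minimality of $t'$ forces that worker to be worse for $f$ than $w$, hence $w\notin W^{t'+1}$ and so $\nu^{t'}(w)\,P^\mu_w\,f\,P^\mu_w\,\mu^{j-1}(w)$. Thus $w$ has already climbed strictly above $f'=\mu^{j-1}(w)\in\sigma^j_F$, which means $w$ left $f'$ (vacating it) at some stage $\tilde t<t'$ --- contradicting the minimality of $t'$. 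Note also that this route never invokes Lemma \ref{cada par del algoritmo en un ciclo}; that is what permits the remark in Appendix \ref{apendice pruebas} that the present lemma holds even without the assumption that all acceptable pairs are stable. Your route uses that lemma essentially (in the ``later cycle'' step and in any analysis of the jump stage), so even if it could be repaired it would establish only the weaker, assumption-dependent statement.
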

\begin{proof}
 Note that, by Lemma \ref{w0f0 en ultimo ciclo}, for $\sigma^k$ (the last cycle of the sequence generated from $\mu$ to $\nu$) the result holds.  Now, fix a cycle of a sequence generated from $\mu$ to $\nu$, say $\sigma^j$ ($j\neq k$). Assume that $\sigma^j \cap A(P^\mu,w^0)=\emptyset$.  Let $t'$ be the first stage of the algorithm such that $\nu^{t'}(f)=\emptyset$  for some $(w,f)\in \sigma^j
   .$  This implies that there are $\widetilde{w},w^{t'+1}\in W$ such that
   $$
       \nu^{t'-1}(f)=   \widetilde{w},
   \text{ and }
           \nu^{t'+1}(f)=w^{t'+1}.
 $$

Recall that by definition of cycle $\sigma^{j}$, $\mu^j(f)=w.$  By the assumption that $\nu^t(f)\neq \mu^{j}(f)$ for each stage $t$ of the algorithm, we have that  $w^{t'+1}\neq w.$ Since $t'$ is the first stage of the algorithm such that $\nu^{t'}(f)=\emptyset$, then $wP^{\mu}_{f}w^{t'+1}.$ Hence, $w\notin W^{t'+1}.$ This implies that $\nu^{t'}(w)P^{\mu}_{w}f.$ Then, $\nu^{t'}(w)\neq \mu^{j-1}(w).$ Let $f'$ be such that $f'\in \sigma_j^F$, and $f'=\mu^{j-1}(w).$ Then, there is a stage $\tilde{t}<t'$ such that   $\nu^{\tilde{t}-1}(f')=w$ and  $\nu^{\tilde{t}}(f')=\emptyset.$ This contradicts our election of $t'.$  Therefore,  $\sigma^j \cap A(P^\mu,w^0)\neq\emptyset$.
   \end{proof}
   
\begin{remark}
    Note that both Lemmata \ref{w0f0 en ultimo ciclo} and \ref{cada ciclo es tocado por el algoritmo} hold without the assumption that all acceptable pairs are stable.
\end{remark}
\bigskip

\noindent \begin{proof}[Proof of Lemma \ref{cardinalidad del primer ciclo y ultimo ciclo}]Let $\{\sigma^1,\ldots,\sigma^k\}$ be any sequence of cycles generated from $\mu$ to $\nu$.
    
     To prove (i), by Remark \ref{remark algoritmo} (iii) it is sufficient to show that $\mu(w)\neq \nu(w)$ for each $w\in \sigma^k_W.$  Note that, by Theorem \ref{asignaciones consec}, there is a sequence of stable matchings $\mu^{0},\ldots,\mu^k$ such that $\mu^0=\mu$, $\sigma^{i}$ is a cycle for $P^{\mu^{i-1}}$ and $\mu^k=\nu.$   Then, $\nu(w)=\mu^{k}(w)P^{\mu}_w\mu^{k-1}(w)$ for each $w\in \sigma^k_W$. Moreover, by the construction of the sequence of matchings, we have that $\mu^{k-1}R^{\mu}_W \mu$. Thus,  $\mu(w)\neq \nu(w)$ for each $w\in \sigma^k_W.$  Therefore, $(w,\nu(w))\in A(P^{\mu},w^0)$ for each $(w,\nu(w))\in \sigma^{k}.$

 To prove (ii),  assume $k\neq 1$  and ,w.l.o.g.,  that $\sigma^1=\left\{(w_2,f_1),(w_3,f_2),\ldots, (w_1,f_r)\right\}$. By Lemma \ref{cada ciclo es tocado por el algoritmo}, we have that $A(P^\mu,w^0)\cap \sigma^1\neq \emptyset$. Let $t$ be the first stage of the algorithm in which a pair of $\sigma^1$ is formed. W.l.o.g., assume  $(w_2,f_1)$ is such a pair, i.e. $\nu^{t}(f_1)=w_2$. Moreover, $\nu^t(f_2)=\emptyset$. Since $(w_3,f_2)\in \sigma^1,$ and $\mu(w_3)=f_3$, then $f_2 P^\mu_{w_3}f_3.$ Notice that $\mu(w_3)=\nu^{t}(w_3)$ imply that $f_2 P^\mu_{w_3}\nu^{t}(w_3)=f_3$. Thus, $w_3\in W^{t+1}.$  Since, $w_3$ is the second worker in $P^\mu_{f_2}$ (by definition of cycle), then $w_3$ is the most preferred worker in  $W^{t+1}$, and therefore, $\nu^{t+1}(f_2)=w_3.$ This means that the pair $(w_3,f_2)\in A^{t+1}(P^{\mu},w^0)\setminus A^{t}(P^{\mu},w^0)$. Since $\nu^{t+1}(f_3)=\emptyset$, repeating this reasoning, we can prove that the pair $(w_4,f_3)\in A^{t+2}(P^{\mu},w^0)\setminus A^{t+1}(P^{\mu},w^0)$. In each of the subsequent stages, the following pairs in the cycle are formed until pair $$(w_r,f_{r-1})\in A^{t+|\sigma^1|-2}(P^{\mu},w^0)\setminus A^{t+|\sigma^1|-3}(P^{\mu},w^0).$$
However, $(w_1,f_r)\notin A(P^\mu,w^0).$ To see this, let $f^{t-2}=\nu^{t-1}(w_1)$. By the election of the pair $(w_2,f_1)$,  we have $(w_1,f^{t-2})\notin \sigma^{1}.$ Since $(w_1,f_r)\in \sigma^1$, by definition of cycle,  $f^{t-2}P^\mu_{w_1}f_r.$ This implies that, by Remark \ref{remark algoritmo} (ii),  $(w_1,f_{r})\notin A(P^\mu,w^0).$  Therefore, $|A(P^\mu,w^0)\cap \sigma^1|=|\sigma^1|-1.$   
\end{proof}

\bigskip

Given that, by Corollary \ref{corolario nu estable}, $\mu P_F \nu$, we assure that $P^{\mu,\nu}$ is well defined.
\begin{lemma}\label{A (P) iguales}
$A(P^{\mu},w^0)=A(P^{\mu,\nu},w^0)$.  
\end{lemma}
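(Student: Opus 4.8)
The plan is to show that the algorithm, when run on the input profile $P^{\mu}$ with worker $w^0$, produces exactly the same sequence of worker-firm pairs as when it is run on the further-reduced profile $P^{\mu,\nu}$ with the same worker $w^0$. The natural strategy is an induction on the stages of the algorithm, proving that at each stage the two runs coincide: the same set $W^t$ is considered, the same worker $w^t$ is chosen, and hence the same pair $(w^t,f^{t-1})$ is added to the output set. If this stage-by-stage identity holds, then the two runs terminate at the same stage with identical pair sets, giving $A(P^{\mu},w^0)=A(P^{\mu,\nu},w^0)$.

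The key observation making the induction work is that every pair the algorithm forms on input $P^{\mu}$ is a stable pair lying in some cycle of a sequence generated from $\mu$ to $\nu$ (Lemma \ref{cada par del algoritmo en un ciclo}), and cycles only involve firms and workers whose partners change between $\mu$ and $\nu$. By Remark \ref{remark de reduccion M-M}(i) applied to the pair $(\mu,\nu)$, for each firm $f$ the most-preferred entry of $P^{\mu,\nu}_f$ is $\mu(f)$ and the least-preferred is $\nu(f)$; symmetrically for workers. Thus $P^{\mu,\nu}$ is obtained from $P^{\mu}$ by deleting, for each $f$, all workers ranked below $\nu(f)$ (and the corresponding symmetric deletions for workers). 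First I would verify that none of these deleted entries is ever relevant to the algorithm's choices: the worker $w^t$ selected at stage $t$ is the $R^{\mu}_{f^{t-1}}$-best member of $W^t$, and by Remark \ref{remark algoritmo}(ii) together with the fact (from part (iii) of that remark and the stability of $\nu$) that $\nu(f^{t-1})=\nu^{T-1}(f^{t-1})$ lies weakly above every chosen worker, the choice set $W^t$ and the selected $w^t$ never reach below the $\nu$-partner of the firm. Hence restricting to $P^{\mu,\nu}$ deletes only entries the algorithm would never touch.

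The heart of the induction is then the identity $W^t(P^{\mu})=W^t(P^{\mu,\nu})$ together with identical offering firms $f^{t-1}$. The definition of $W^t$ involves the comparison $f^{t-1}P_w\nu^{t-1}(w)$; I would argue that since both $f^{t-1}$ and the current partner $\nu^{t-1}(w)$ survive the passage from $P^{\mu}$ to $P^{\mu,\nu}$ (both being entries at or above the respective $\nu$-partners for every relevant worker), the truth value of this comparison is preserved, and so is the comparison defining the inclusion of $w^0$. The ranking $R^{\mu}_{f^{t-1}}$ restricted to $W^t$ agrees with $R^{\mu,\nu}_{f^{t-1}}$ because no member of $W^t$ has been deleted. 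This yields the same $w^t$, the same $f^t=\nu^{t-1}(w^t)$, and the same updated matchings $\nu^t$, closing the induction; the stopping condition $w^t=w^0$ is reached simultaneously in both runs.

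The main obstacle I anticipate is making rigorous the claim that the reduction from $P^{\mu}$ to $P^{\mu,\nu}$ deletes only ``irrelevant'' entries — precisely, that no worker ever offered or considered by the algorithm is removed when passing to $P^{\mu,\nu}$. This requires carefully tracking, via Remark \ref{remark algoritmo}(ii), that the partners $\nu^{t-1}(w)$ along the run stay weakly above $\nu(w)=\mu^k(w)$ in $w$'s $P^{\mu}$-preferences (so they survive the reduction), and symmetrically that the firms' choice sets $W^t$ never descend past $\nu(f)$. The delicate case is the final stage, where $w^0$ re-enters the choice set; one must confirm that $f^{T-1}=\nu(w^0)$ is still acceptable to $w^0$ under $P^{\mu,\nu}$, which again follows from Remark \ref{remark de reduccion M-M}(i) since $\nu(w^0)$ is exactly $w^0$'s most-preferred (worker-optimal) entry under $P^{\mu,\nu}$.
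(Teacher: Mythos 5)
Your overall strategy---running the two instances of the algorithm in parallel and showing by induction on stages that they form the same pairs---is exactly the route the paper takes (its proof is a compressed version of this induction, resting on Remark \ref{remark algoritmo}(ii) and on the assumption that all acceptable pairs are stable). However, the step you call ``the heart of the induction,'' the identity $W^t(P^{\mu})=W^t(P^{\mu,\nu})$, is false, and the failure traces back to a directional error in your description of the reduction. On the worker side, passing from $P^{\mu}$ to $P^{\mu,\nu}$ deletes from $w$'s list every firm \emph{strictly preferred to} $\nu(w)$ (by Remark \ref{remark de reduccion M-M}(i), the top of $w$'s list under $P^{\mu,\nu}$ is $\nu(w)$), so an entry ``at or above the $\nu$-partner'' is precisely what gets removed, not what survives. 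A worker $w\in W^t(P^{\mu})$ who is \emph{not} chosen at stage $t$ may strictly prefer $f^{t-1}$ to her final partner $\nu(w)$ (stability of $\nu$ only forces $\nu(f^{t-1})P^{\mu}_{f^{t-1}}w$ for such a $w$), and then $f^{t-1}\notin P^{\mu,\nu}_{w}$, hence $w\notin W^t(P^{\mu,\nu})$. This already happens in the paper's own Example \ref{Ejemplo 1}: with $w^0=w_1$ and $f^0=f_2$ we have $W^1(P^{\mu})=\{w_3,w_4\}$, but $f_2P^{\mu}_{w_3}f_4=\nu(w_3)$, so $f_2$ is not acceptable to $w_3$ under $P^{\mu,\nu}$ (see Table \ref{tabla mu y nu reducida del ejemplo}) and $W^1(P^{\mu,\nu})=\{w_4\}$.

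The gap is repairable, and the repair is what the paper's proof actually establishes: you do not need equality of the proposal sets, only that the \emph{chosen} worker coincides at each stage. One inclusion, $W^t(P^{\mu,\nu})\subseteq W^t(P^{\mu})$, is immediate because the lists of $P^{\mu,\nu}$ are sublists of those of $P^{\mu}$ in the same relative order. For the converse, it suffices that the $R^{\mu}_{f^{t-1}}$-maximal element $w^t$ of $W^t(P^{\mu})$ survives the reduction: by Remark \ref{remark algoritmo}(ii), $\nu(w^t)R^{\mu}_{w^t}\nu^{t}(w^t)=f^{t-1}$ and $f^{t-1}P^{\mu}_{w^t}\nu^{t-1}(w^t)R^{\mu}_{w^t}\mu(w^t)$, so $f^{t-1}$ lies in the band of $w^t$'s list that is preserved; and the hypothesis that every acceptable pair is stable (used here via the opposition-of-interests property of stable pairs) yields $\mu(f^{t-1})R^{\mu}_{f^{t-1}}w^t R^{\mu}_{f^{t-1}}\nu(f^{t-1})$, so the pair also survives on the firm side and the restricted rankings agree. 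These are exactly the two one-sided equivalences displayed in the paper's proof, namely $wP^{\mu}_fw'R^{\mu}_f\nu(f)$ if and only if $wP^{\mu,\nu}_fw'$, and $\nu(w)R^{\mu}_wfP^{\mu}_wf'$ if and only if $fP^{\mu,\nu}_wf'$; note that both are conditioned on staying weakly above $\nu(f)$, respectively weakly below $\nu(w)$, which is precisely the restriction your set identity overlooks.
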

\begin{proof}
   By Remark \ref{remark algoritmo} (ii), for each worker, the firms that belong to a pair formed by the algorithm are found in the worker's preference between his partner under $\nu$ and his partner under $\mu$. This means that  $\nu(w)R^{\mu}_w\nu^t(w)R^{\mu}_w\mu(w)$ for each $w\in W\setminus\{w^0\}$. Similarly, for each firm, the workers that belong to a pair formed by the algorithm are found in the firm's preference between its partner under $\mu$ and its partner under $\nu$. 
     Moreover, since each acceptable pair is a stable pair we have that $ w P^{\mu}_f w' R^{\mu}_f \nu(f)$ if and only if $w P_f^{\mu,\nu}w'$ for each $f \in F$ and each $w,w' \in W$. Symmetrically,  $\nu(w)R^{\mu}_w f P^{\mu}_w f'$ if and only if $ f P_w^{\mu,\nu}f'$ for each $w\in W$ and each $f,f'\in F$.
 Therefore, the algorithm forms the same pairs in both preferences $P^{\mu}$ and $P^{\mu,\nu}$. 
\end{proof}

\bigskip

\begin{lemma}\label{lemma de AP de un reducido subconjunto del AP general}
    Let $\sigma^{i}$ be a cycle of a sequence of cycles generated from $\mu$ to $\nu$. Then, $A(P^{\mu^{i},\nu},w^0)\subseteq A(P^{\mu,\nu},w^0).$
\end{lemma}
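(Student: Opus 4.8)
The plan is to run the algorithm in parallel on the two reduced profiles $P^{\mu,\nu}$ (the ``long run'', whose output pairs are $A(P^{\mu,\nu},w^0)$) and $P^{\mu^i,\nu}$ (the ``short run''), and to show, worker by worker, that every pair produced by the short run is also produced by the long run. First I would record the structural relation between the profiles: since the successive cyclic matchings give $\mu\,R_F\,\mu^i\,R_F\,\nu$ and $\nu\,R_W\,\mu^i\,R_W\,\mu$, Remark \ref{remark de reduccion M-M} shows that the list of each firm $f$ in $P^{\mu^i,\nu}$ is the lower end (from $\mu^i(f)$ down to $\nu(f)$) of its list in $P^{\mu,\nu}$, while dually the list of each worker $w$ is the upper end (from $\nu(w)$ down to $\mu^i(w)$); by Remark \ref{remark de reduccion M-M}(iii) these truncations are mutually consistent. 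In particular the sequence of cycles generated from $\mu^i$ to $\nu$ is the tail $\{\sigma^{i+1},\ldots,\sigma^k\}$ of the given sequence.

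The engine is Remark \ref{remark algoritmo}(ii): in either run the partner of a worker $w$ improves monotonically, from its starting value ($\mu(w)$ in the long run, $\mu^i(w)$ in the short run) up to $\nu(w)$, and the pairs the run forms with $w$ are exactly the firms $w$ successively moves to, that is, all partners along this chain except the starting one. Since $\mu^i(w)\,R_w\,\mu(w)$, the short run's chain for $w$ starts weakly higher and ends at the same firm $\nu(w)$, so the firms $w$ visits in the short run lie in the interval $[\mu^i(w),\nu(w)]$, a sub-interval of $[\mu(w),\nu(w)]$. Hence it suffices to prove that the set of firms a worker moves to is monotone in its starting firm: raising $w$'s start from $\mu(w)$ to $\mu^i(w)$ can only delete the moves lying below $\mu^i(w)$, never create a new one. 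Granting this, every short-run pair $(w,f)$ is a long-run pair. The terminal worker $w^0$ is handled the same way, its final firm being $\nu(w^0)$ in both runs by Lemma \ref{w0f0 en ultimo ciclo}; and if $\mu^i(w^0)=\nu(w^0)$ the short run forms no pair and the inclusion is trivial.

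The main obstacle is precisely this monotonicity of the set of moves in the starting firm, the difficulty being that the algorithm need not process the cycles in the order $\sigma^1,\sigma^2,\ldots$: the long run additionally works through $\sigma^1,\ldots,\sigma^i$ and could enter a common cycle $\sigma^j$ ($j>i$) at a different pair, a priori skipping a \emph{different} pair of $\sigma^j$. I expect to resolve this by a blocking-pair contradiction in the style of the proofs of Lemmas \ref{cada par del algoritmo en un ciclo} and \ref{w0f0 en ultimo ciclo}. Suppose some short-run pair $(w,f)\in\sigma^j$ (necessarily stable, hence lying in a unique cycle) fails to be formed in the long run; by long-run monotonicity $w$ then jumps over $f$, so at the relevant stage $w$ already holds a firm preferred to $f$ while, by the consecutive-formation mechanism isolated in the proof of Lemma \ref{cardinalidad del primer ciclo y ultimo ciclo}(ii), $f$ is vacated. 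Tracing the occupant of $f$ back to the common intermediate matching $\mu^{j-1}$ and invoking Remark \ref{remark algoritmo}(ii) should then exhibit a pair blocking the stable matching $\mu^{j-1}$, the desired contradiction. The boundary cases are immediate: the last cycle $\sigma^k$ is fully formed in both runs by Lemma \ref{cardinalidad del primer ciclo y ultimo ciclo}(i), and the terminal worker $w^0$ reaches the common firm $\nu(w^0)$ by Lemma \ref{w0f0 en ultimo ciclo}.
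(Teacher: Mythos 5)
Your framing (run the two algorithms in parallel and show every pair formed by the short run is formed by the long run) matches the paper's, but the heart of your argument is missing, and the route you sketch for filling it is the problematic one. Your ``monotonicity of the set of moves in the starting firm'' is not a reduction at all: it is the lemma itself, restated worker by worker. Everything therefore hinges on the step you defer with ``I expect to resolve this'' and ``should then exhibit,'' and as sketched that step has two concrete defects. First, you invoke the consecutive-formation mechanism of Lemma \ref{cardinalidad del primer ciclo y ultimo ciclo} (ii) for an arbitrary cycle $\sigma^j$ inside the long run; but that mechanism is proved only for the \emph{first} cycle of a sequence (its proof uses that, when the algorithm first touches the cycle, the cycle's firms still hold their partners under the starting matching $\mu$). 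Extending it to later cycles is exactly what the paper's induction in Theorem \ref{teorema principal} accomplishes \emph{by means of} the present lemma, so your sketch is circular. Second, nothing you establish controls timing: you need $f$ to be vacant in the long run at a moment when $w$ already holds something better, but the stage at which $w$ ``jumps over'' $f$ and the stages at which $f$ is vacated are not related by anything you have proved, so the tracing back to $\mu^{j-1}$ and the alleged blocking pair never materialize. (You also implicitly assume the short run's output is $\nu$, e.g.\ when invoking Lemma \ref{w0f0 en ultimo ciclo} for $w^0$; that, too, is downstream of this lemma rather than available before it.)

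The paper's proof avoids cycles and intermediate stable matchings entirely. It takes the \emph{first stage} $t'$ at which the two runs' sets of formed pairs diverge; up to that stage the runs have formed the same pairs, so at stage $t'$ the same firm $f$ is vacant in both, the short run assigns it $w$ while the long run assigns it some $w^\star\neq w$. A two-case comparison ($wP^{\mu,\nu}_f w^\star$ or $w^\star P^{\mu,\nu}_f w$) then yields a contradiction in each case, using only Remark \ref{remark algoritmo} (ii) and the fact that the reduction procedure preserves the relative order of partners acceptable under both $P^{\mu,\nu}$ and $P^{\mu^{i},\nu}$. If you want to salvage your plan, replace the ``jump over $f$, trace back to $\mu^{j-1}$'' step with this first-divergence argument: it localizes the comparison to a single stage where both runs face the same vacant firm, which is precisely the control your sketch lacks.
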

\begin{proof}
   Assume that there is a pair  $(w,f)\in A(P^{\mu^{i},\nu},w^0) $ and  $(w,f)\notin A(P^{\mu,\nu},w^0) $. Let $t'$ the stage of the algorithm applied to the profile $ P^{\mu^{i},\nu}$ and worker $w^0$ such that $A^{t}(P^{\mu^{i},\nu},w^0)= A^{t}(P^{\mu,\nu},w^0)$ for each $t<t'$ and $(w,f) \in A^{t'}(P^{\mu^{i},\nu},w^0).$  We denote by $\nu^{t}  $ and $\widetilde{\nu}^{t}  $ to the matchings formed at the stage $t$ in the algorithm applied  to the profile $P^{\mu,\nu}$ and $P^{\mu^{i},\nu}$, respectively. Note that, for each $t<t'$ we have that $\widetilde{\nu}^{t}=\nu^{t}$. Moreover, in stage $t'-1$ we have that $\widetilde{\nu}^{t'-1}(f)=\nu^{t'-1}(f)=\emptyset$, and in stage $t'$, $\widetilde{\nu}^{t'}(f)=w$ and $\nu^{t'}(f) \neq w.$
   
Note that, since $f=\widetilde{\nu}^{t'}(w)$, by Remark \ref{remark algoritmo} (ii) we have that 
\begin{equation}\label{nutilde en t menos uno preferido a f}
    fP^{\mu^{i},\nu}_{w}\widetilde{\nu}^{t'-1}(w).
\end{equation}
Let $=w^\star=\nu^{t'}(f) .$
Now, there are two cases to consider:
   
   \noindent \textbf{Case 1: $\boldsymbol{wP^{\mu,\nu}_f w^\star}$}. Then, $w\notin W^{t'}$ (when the algorithm is applied to $P^{\mu,\nu}$). Let $f'=\nu^{t'-1}(w).$ Thus,  $f'P^{\mu,\nu}_w f.$ Since $f=\widetilde{\nu}^{t'}(w)$ and $f'=\widetilde{\nu}^{t'-1}(w)$, we have that $f$ and $f'$ are acceptable firms for $w$ under $P_w^{\mu_{i},\nu}$ . Therefore, by the reduction procedure, $f'P_w^{\mu_{i},\nu}f$. Thus, $\widetilde{\nu}^{t'-1}(w)P^{\mu^{i},\nu}_w f,$
   contradicting $\eqref{nutilde en t menos uno preferido a f}.$
 
  \noindent \textbf{Case 2: $\boldsymbol{w^\star P^{\mu,\nu}_f w}$}. The proof is similar to that of Case 1.

\noindent Since in both cases we get a contradiction, we conclude that $A(P^{\mu^{i},\nu},w^0)\subseteq A(P^{\mu,\nu},w^0).$
\end{proof}

\bigskip

\noindent \begin{proof}[Proof of Theorem \ref{teorema principal}] Let $ \mu\in S(P)$, $w^0\in W$ such that $\mu(w^0)\neq \mu_{W}(w^0)$, and let $\nu$ be the output of the algorithm applied to the profile $P^{\mu}$ and worker $w^0$. Let $\{\sigma^1,\ldots,\sigma^k\}$ the sequence of cycles generated from $\mu$ to $\nu$, and  let  $\mu=\mu^0,\mu^1,\ldots,\mu^{k}=\nu$ be the sequence of stable matchings between $\mu$ and $\nu$ generated by the sequence of cycles. Recall that, by Theorem \ref{asignaciones consec} and Lemma \ref{el ultimo ciclo de dos suceciones es el mismo},  any sequence of cycles generated from $\mu$ to $\nu$ has the same cycles (possibly in different order) and the last one is always the same, so the right-hand side of equation \eqref{nombre} is well defined.  If $k=1$,  by Lemma \ref{cardinalidad del primer ciclo y ultimo ciclo} (i) the proof is completed.
     If $k=2$,  by Lemma \ref{cardinalidad del primer ciclo y ultimo ciclo} (i) and (ii) the proof is completed. Now, we prove the case $k\geq3$.
Note that, by Lemma \ref{cardinalidad del primer ciclo y ultimo ciclo} (i), $|A(P^{\mu^{k-1},\nu},w^0)\cap\sigma^{k}|=|\sigma^k|.$ Since $\sigma^{k-1}$ is a cycle for $P^{\mu^{k-2},\nu}$, by Lemmata \ref{cada ciclo es tocado por el algoritmo} and  \ref{lemma de AP de un reducido subconjunto del AP general}, $A(P^{\mu^{k-1},\nu},w^0)\subsetneq A(P^{\mu^{k-2},\nu},w^0)$. Thus, by Lemma \ref{cada par del algoritmo en un ciclo}, $ A(P^{\mu^{k-2},\nu},w^0)\setminus A(P^{\mu^{k-1},\nu},w^0)\subseteq \sigma^{k-1}.$ Notice that cycle $\sigma^{k-1}$ is the first cycle of the sequence generated from $\mu^{k-2}$ to $\nu$. Then, by Lemma \ref{cardinalidad del primer ciclo y ultimo ciclo} (ii), $|A(P^{\mu^{k-2},\nu},w^0)|=|\sigma^{k}|+(|\sigma^{k-1}|-1).$ 

Following the same reasoning, by Lemmata \ref{cada ciclo es tocado por el algoritmo} and \ref{lemma de AP de un reducido subconjunto del AP general}, $A(P^{\mu^{k-2},\nu},w^0)\subsetneq A(P^{\mu^{k-3},\nu},w^0)$. Thus, by Lemma \ref{cada par del algoritmo en un ciclo}, $ A(P^{\mu^{k-3},\nu},w^0)\setminus A(P^{\mu^{k-2},\nu},w^0)\subseteq \sigma^{k-2}.$ As $\sigma^{k-2}$ is the first cycle of the sequence generated from $\mu^{k-3}$ to $\nu$, then by Lemma \ref{cardinalidad del primer ciclo y ultimo ciclo} (ii), $|A(P^{\mu^{k-3},\nu},w^0)|=|\sigma^{k}|+(|\sigma^{k-1}|-1) +(|\sigma^{k-2}|-1).$ Repeating this argument we achieve that  $|A(P^{\mu^{0},\nu},w^0)|=|\sigma^k|+\sum_{i=1}^{k-1}(|\sigma^i|-1).$ Therefore, by Lemma \ref{A (P) iguales}, 
 equation \eqref{nombre} holds.
\end{proof}
\section{Further results}\label{apendiceb}

In this appendix, we present two examples. One of them illustrates that when we allow for some acceptable pairs that are not stable, we only find a lower bound for the number of steps needed to re-stabilize the market. The other example shows that the extension to a many-to-one model is not straightforward. Although our (adapted) algorithm can be applied and reaches a stable matching, our result on counting the steps for
re-stabilization using cycles in preferences is not true.

\subsection{When all acceptable pairs are not stable}\label{apendice sub b1}

\begin{example}
    Let $(F,W,P)$ be a market where $F=\{f_1,f_2,f_3,f_4\}$, $W=\{w_1,w_2,w_3,w_4\}$, and the preferences profile given by:
\begin{center}
    
$
\begin{array}{l}
P_{f_{1}}:w_{1},w_{2},w_3,w_4 \\ 
P_{f_{2}} :w_{2},w_{1} \\ 
P_{f_{3}} :w_{3},w_2 \\ 
P_{f_{4}} :w_4,w_1,w_3
\end{array}
\begin{array}{l}
P_{w_{1}} :f_{2},f_4,f_{1} \\ 
P_{w_{2}} :f_{3},f_{1},f_2 \\ 
P_{w_{3}} :f_{4},f_1,f_3 \\ 
P_{w_{4}} :f_1,f_4
\end{array}\hspace{25pt}$
\end{center}

\noindent Noto that preference profile $P$ is already reduced, i.e. $P=P^{\mu_F}.$ The stable matchings are

\begin{center}
$
\begin{array}{l|cccc}
&f_1&f_2&f_3&f_4  \\ \hline
\mu_F& w_1&w_2&w_3&w_4\\
\mu_1& w_2&w_1&w_3&w_4\\
\mu_2& w_3&w_1&w_2&w_4\\
\mu_W& w_4&w_1&w_2&w_3\\
\end{array}$
\end{center}

\noindent Note that the $(w_1,f_4)$ is an acceptable pair but not stable. Now, we apply the algorithm to the initial matching $\mu_F$, the profile $P^{\mu_F}$, and consider $w^0=w_4.$ Define $f^0=\mu_F(w_4)=f_4$, $A^0=\emptyset,$ and 
$$
\nu^0=\begin{pmatrix}
f_1 & f_2 & f_3 & f_4& \emptyset  \\
w_1& w_2 & w_3& \emptyset & w_4\\
 \end{pmatrix}.$$
\medskip
\noindent Now, we detail each stage of the algorithm:
\newline
\noindent \textit{\textbf{Stage 1}}
As $W^1=\left\lbrace w_1,w_3\right\rbrace $ and $w_1P^{\mu_F}_{f_4}w_3$, then $ w^1=w_1$, and $ f^1=\nu^0(w_1)=f_1.$ Since $ w^1=w_1\neq w^0=w_4,$ we have 
$A^1=\left\lbrace (w_1,f_4)\right\rbrace $, and
$$
\nu^1=\begin{pmatrix}
f_1 & f_2 & f_3 & f_4& \emptyset  \\
\emptyset& w_2 & w_3& w_1 & w_4\\
\end{pmatrix}.$$ 
\bigskip
\noindent \textit{\textbf{Stage 2}}
As $W^2=\left\lbrace w_2,w_3,w_4\right\rbrace $ and $w_2P^{\mu_F}_{f_1}w_3P^{\mu_F}_{f_1}w_4$, then $ w^2=w_2$, and $ f^2=\nu^1(w_2)=f_2.$ Since $ w^2=w_2\neq w^0=w_4,$ we have $A^2=\left\lbrace (w_2,f_1),(w_1,f_4)\right\rbrace $, and 
$$
\nu^2=\begin{pmatrix}
f_1 & f_2 & f_3 & f_4& \emptyset  \\
w_2& \emptyset& w_3 & w_1 & w_4\\
\end{pmatrix}.$$ 
\bigskip
\noindent \textit{\textbf{Stage 3}}
As $W^3=\left\lbrace w_1\right\rbrace $, then $ w^3=w_1$, and $ f^3=\nu^2(w_1)=f_4.$ Since $ w^3=w_1\neq w^0=w_4$, we have $A^3=\left\lbrace (w_1,f_2),(w_2,f_1),(w_1,f_4)\right\rbrace $, and 
$$
\nu^3=\begin{pmatrix}
f_1 & f_2 & f_3 & f_4& \emptyset  \\
w_2& w_1& w_3 & \emptyset& w_4\\
\end{pmatrix}.$$ 

\noindent \textit{\textbf{Stage 4}}
As $W^4=\left\lbrace w_3\right\rbrace $, then $ w^4=w_3$, and $ f^4=\nu^3(w_3)=f_3.$ Since $ w^4=w_3\neq w^0=w_4$, we have $A^4=\left\lbrace (w_3,f_4),(w_1,f_2),(w_2,f_1),(w_1,f_4)\right\rbrace $, and 
$$
\nu^4=\begin{pmatrix}
f_1 & f_2 & f_3 & f_4& \emptyset  \\
w_2& w_1&  \emptyset& w_3& w_4\\
\end{pmatrix}.$$ 

\noindent \textit{\textbf{Stage 5}}
As $W^5=\left\lbrace w_2\right\rbrace $, then $ w^5=w_2$, and $ f^5=\nu^4(w_2)=f_1.$ Since $ w^5=w_2\neq w^0=w_4$, we have $A^5=\left\lbrace (w_2,f_3),(w_3,f_4),(w_1,f_2),(w_2,f_1),(w_1,f_4)\right\rbrace $, and 
$$
\nu^5=\begin{pmatrix}
f_1 & f_2 & f_3 & f_4& \emptyset  \\
\emptyset& w_1& w_2 & w_3& w_4\\
\end{pmatrix}.$$ 

\noindent \textit{\textbf{Stage 6}}
Lastly, as $W^6=\left\lbrace w_4\right\rbrace $, then $ w^6=w_4=w^0$. Therefore, the algorithm stops, and the outputs of the algorithm are  $A=\left\lbrace (w_4,f_1),(w_2,f_3),(w_3,f_4),(w_1,f_2),(w_2,f_1),(w_1,f_4)\right\rbrace $, and 
$$
\nu=\begin{pmatrix}
f_1 & f_2 & f_3 & f_4 \\
w_4& w_1& w_2 & w_3\\
\end{pmatrix}.$$

\noindent After six stages $\nu=\mu_W$. The unique succession of cycles from $\mu_F$ to $\mu_W$ is  $\left\lbrace \sigma^1,\sigma^2,\sigma^3\right\rbrace $ where 
$\sigma^1=\left\lbrace (w_2,f_1),(w_1,f_2)\right\rbrace$, $\sigma^2=\left\lbrace (w_3,f_1),(w_2,f_3)\right\rbrace$ and $\sigma^3=\left\lbrace (w_4,f_1),(w_3,f_4)\right\rbrace$.
Since $(w_1,f_4)$ is not a stable pair, Lemma \ref{cada par del algoritmo en un ciclo} does not hold. In this case, we have that $(w_1,f_4)\in A(P^{\mu_F},w^0)$ however  $(w_1,f_4)\notin \sigma^i$ for $1\leq i\leq 3.$ Therefore, $|A(P^{\mu_F},w^0)|=6>4=(|\sigma^1|-1)+(|\sigma^2|-1)+|\sigma^3|.$ Moreover, observe that $\sigma^1\subseteq A(P^{\mu_F},w^0)$ implying that Lemma \ref{cardinalidad del primer ciclo y ultimo ciclo} (ii) does not hold also. \hfill $\Diamond$
\end{example}

Note that in this example, $\sigma^1\subseteq A(P^{\mu_F},w^0)$. In fact, when not all acceptable pairs are stable, part (ii) of Lemma \ref{cardinalidad del primer ciclo y ultimo ciclo} should be change to: If  $k\neq1,$ then $|A(P^\mu,w^0)\cap \sigma^1|\geq|\sigma^1|-1.$
The proof is similar, but in this case, the pair $(w_1,f_r)$ could belong to $A(P^\mu,w^0)$. To see this, consider the case that $(w_1,f^{t-2})$ is  an acceptable but not stable pair. If $f^{t-2}P^\mu_{w_1}f_r$, following the same reasoning of the proof of Lemma \ref{cardinalidad del primer ciclo y ultimo ciclo} we get that   $|A(P^\mu,w^0)\cap \sigma^1|=|\sigma^1|-1.$ Now, consider the case that  $f_rP^\mu_{w_1}f^{t-2}P_{w_1}f_1=\mu(w_1)$. Recall that $(w_1,f^{t-2})\notin \sigma^1$. Since $(w_r,f_{r-1})\in A^{t+|\sigma^1|-2}(P^{\mu},w^0)\setminus A^{t+|\sigma^1|-3}(P^{\mu},w^0)$, we have that $\nu^{t+|\sigma^1|-2}(f_r)=\emptyset.$ Then, $\nu^{t+|\sigma^1|-1}(f_r)=w_1$ implying that $(w_1,f_r)\in A(P^\mu,w^0)$.

Therefore, when acceptable pairs are not necessarily stable, our Theorem \ref{teorema principal} can be reformulated into:
\bigskip

\emph{
 Let $ \mu\in S(P)$, $w^0\in W$ such that $\mu(w^0)\neq \mu_{W}(w^0)$, and let $\nu$ be the output of the algorithm applied to the profile $P^{\mu}$ and worker $w^0$. Then, 
    \begin{equation*}\label{ecu1teo}
        |A(P^\mu,w^0)|\geq|\sigma^k|+\sum_{i=1}^{k-1}(|\sigma^i|-1)
    \end{equation*}
    where $\{\sigma^1,\ldots,\sigma^k\}$ is any sequence of cycles generated from $\mu$ to $\nu$.}
\bigskip

The proof follows by an inductive reasoning similar to the one used in  the proof of Theorem \ref{teorema principal}.

\subsection{Extension to the many-to-one model}
In what follows, we introduce a modified version of our algorithm adapted to a many-to-one market, detailed in Table \ref{tabla algoritmo m-to-1}. Although the algorithm leads to a stable outcome, our result regarding counting the step for re-stabilization using cycles in preferences is not true.

\begin{table}[h!]
\centering
\begin{tabular}{l l}
\hline \hline
\multicolumn{2}{l}{\textbf{Algorithm:}}\vspace*{10 pt}\\

\textbf{Input} & A reduced profile $P^\mu$ and $w^0\in W$ such that $\mu(w^0)\neq\mu_W(w^0)$\\

\textbf{Output} & A set of worker-firm pairs $A$, and a matching $\nu\in S(P^\mu)$\vspace*{10 pt}\\
 & \hspace{-55pt}Define:\\
 & \hspace{-55pt} $f^0=\mu(w^0)$,\\
  & \hspace{-55pt} $ \nu^0  (f) =\left\{ 
\begin{array}{lcl}
 \mu(f)\setminus \{w^0\} &  & \text{if } f =f^0 \\ 
\mu (f)&  & \text{otherwise}  \\ 
\end{array}%
\right. ,
$\\
& \hspace{-55pt} $\nu^0(w)=\left\{ 
\begin{array}{lcl}
f  &  & \text{if } w\in \nu^0(f) \\ 
\emptyset &  & \text{otherwise}\\ 
\end{array}%
\right.,$ \\
& \hspace{-55pt} $A^0=\emptyset$.\\
\textbf{Stage $\boldsymbol{t\geq1}$} &  Let $W^t=\{w\in W\setminus\{w^0\}:f^{t-1}P^{\mu}_w \nu^{t-1}(w)\}$\\
& \hspace{20pt}\texttt{IF} $f^{t-1}P^{\mu}_{w^0} \mu(w^0)$\\
& \hspace{40 pt}\texttt{THEN} $W^t=W^t\cup\{w^0\}$\\
& Choose $w^t\in W^t$ such that $ w^t R^{\mu}_{f^{t-1}}w$ for each $w\in W^t$\\
& \hspace{20 pt}\texttt{IF} $w^t\neq w^0:$  \\
& \hspace{40 pt}\texttt{THEN} Define:\\
& \hspace{68 pt}$f^{t}=\nu^{t-1}(w^t)$  \\
&\hspace{68 pt}$
 \nu^t  (f) =\left\{ 
\begin{array}{lcl}
 \nu^{t-1}(f)\setminus\{w^t\} &  & \text{if } f =f^{t} \\ 
\nu^{t-1}(f) \cup \{w^t\}   &  & \text{if } f =f^{t-1} \\ 
\nu^{t-1} (f)&  & \text{otherwise}  \\ 
\end{array}%
\right. 
$\\
&\hspace{68 pt}$\nu^{t}(w)=\left\{ 
\begin{array}{lcl}
f  &  & \text{if } w\in \nu^t(f) \\ 
\emptyset &  & \text{otherwise}\\ 
\end{array}%
\right.$ \\
&\hspace{68 pt}$A^t=A^{t-1}\cup \{(w^t,f^{t-1})\}$\\
&\hspace{40 pt}\texttt{AND} continue to Stage $t+1.$\\
& \hspace{20 pt}\texttt{ELSE}: Set $
 \nu  (f) =\left\{ 
\begin{array}{lcl}
 \nu^{t-1}(f) \cup \{w^0\}   &  & \text{if } f =f^{t-1} \\ 
\nu^{t-1} (f)&  & \text{otherwise}  \\ 
\end{array}%
\right. 
$ \\
&\hspace{68 pt}$\nu  (w) =\left\{ 
\begin{array}{lcl}
f  &  & \text{if } w\in \nu(f) \\ 
\emptyset &  & \text{otherwise}\\ 
\end{array}%
\right. 
$\\
&\hspace{68 pt}$A=A^{t-1}\cup\{(w^0,f^{t-1})\}$, and \texttt{STOP}. \\


\hline \hline
\end{tabular}
\caption{Re-stabilization algorithm for many-to-one markets}
\label{tabla algoritmo m-to-1}
\end{table}

The following example illustrates how the modified algorithm re-stabilizes the market, but our result on counting steps fails. 

\begin{example}
    Let $(F,W,P)$ be a market where $F=\{f_1,f_2\}$, $W=\{w_1,w_2,w_3,w_4\}$, each firm has a quota $q_{f_i}=2$ for each $i=1,2$;  and the preference profile is given by:
\begin{center}
    
$
\begin{array}{l}
P_{f_{1}} :w_{1},w_{2},w_3,w_4 \\ 
P_{f_{2}} :w_3,w_4,w_{1},w_{2}
\end{array}
\begin{array}{l}
P_{w_{1}} =P_{w_{2}} :f_{2},f_{1} \\ 
P_{w_{3}} =P_{w_{4}} :f_{1},f_2
\end{array}\hspace{25pt}$
\end{center}

\noindent The stable matchings are
\begin{center}
$
\begin{array}{l|cccc}
&f_1&f_2 \\ \hline
\mu_F& \left\lbrace {w_1,w_2}\right\rbrace & \left\lbrace {w_3,w_4}\right\rbrace \\
\mu_1& \left\lbrace {w_2,w_3}\right\rbrace &\left\lbrace {w_1,w_4}\right\rbrace \\
\mu_W& \left\lbrace {w_3,w_4}\right\rbrace &\left\lbrace {w_1,w_2}\right\rbrace \\
\end{array}$
\end{center}

\noindent Observe that all acceptable pairs are also stable. Now, we apply the algorithm to the reduced profile $P^{\mu_F}$, and consider $w^0=w_2.$ Define $f^0=\mu_F(w_2)=f_1$, $A^0=\emptyset,$ and 
$$
\nu^0=\begin{pmatrix}
f_1 & f_2 & \emptyset  \\
\left\lbrace {w_1}\right\rbrace & \left\lbrace {w_3,w_4 }\right\rbrace &  \left\lbrace {w_2}\right\rbrace\\
 \end{pmatrix}.$$
\medskip
\noindent Now, we detail each stage of the algorithm:
\newline
\noindent \textit{\textbf{Stage 1}}
As $W^1=\left\lbrace w_3,w_4\right\rbrace $ and $w_3P^{\mu_F}_{f_1}w_4$, then $ w^1=w_3$, and $ f^1=\nu^0(w_3)=f_2.$ Since $ w^1=w_3\neq w^0=w_2,$ we have 
$A^1=\left\lbrace (w_3,f_1)\right\rbrace $, and
$$
\nu^1=\begin{pmatrix}
f_1 & f_2 & \emptyset  \\
\left\lbrace {w_1,w_3}\right\rbrace & \left\lbrace {w_4 }\right\rbrace &  \left\lbrace {w_2}\right\rbrace\\
\end{pmatrix}.$$ 
\bigskip
\noindent \textit{\textbf{Stage 2}}
As $W^2=\left\lbrace w_1,w_2\right\rbrace $ and $w_1P^{\mu_F}_{f_2}w_2$, then $ w^2=w_1$, and $ f^2=\nu^1(w_1)=f_1.$ Since $ w^2=w_1\neq w^0=w_2,$ we have $A^2=\left\lbrace (w_1,f_2),(w_3,f_1)\right\rbrace $, and 
$$
\nu^2=\begin{pmatrix}
f_1 & f_2 & \emptyset  \\
\left\lbrace {w_3}\right\rbrace & \left\lbrace {w_1,w_4 }\right\rbrace &  \left\lbrace {w_2}\right\rbrace\\
\end{pmatrix}.$$ 
\bigskip
\noindent \textit{\textbf{Stage 3}}
As $W^3=\left\lbrace w_4\right\rbrace $, then $ w^3=w_4$, and $ f^3=\nu^2(w_4)=f_2.$ Since $ w^3=w_4\neq w^0=w_2$, we have $A^3=\left\lbrace (w_4,f_1),(w_1,f_2),(w_3,f_1)\right\rbrace $, and 
$$
\nu^3=\begin{pmatrix}
f_1 & f_2 & \emptyset  \\
\left\lbrace {w_3,w_4}\right\rbrace & \left\lbrace {w_1 }\right\rbrace &  \left\lbrace {w_2}\right\rbrace\\
\end{pmatrix}.$$ 

\noindent \textit{\textbf{Stage 4}}
Lastly, as $W^4=\left\lbrace w_2\right\rbrace $, then $ w^4=w_2=w^0$. Therefore, the algorithm stops, and the outputs of the algorithm are $A=\left\lbrace (w_2,f_2)(w_4,f_1),(w_1,f_2),(w_3,f_1)\right\rbrace $, and 
$\nu=\begin{pmatrix}
f_1 & f_2  \\
\left\lbrace {w_3,w_4}\right\rbrace & \left\lbrace {w_1,w_2 }\right\rbrace\\
\end{pmatrix}.$

\noindent After four stages $\nu=\mu_W$. The unique sequence of cycles from $\mu_F$ to $\mu_W$ is  $\left\lbrace \sigma^1,\sigma^2\right\rbrace,$ where 
$\sigma^1=\left\lbrace (w_3,f_1),(w_1,f_2)\right\rbrace$ and $\sigma^2=\left\lbrace (w_4,f_1),(w_2,f_2)\right\rbrace$. Note that    $A(P^\mu,w_2)\cap \sigma^1 = \left\lbrace (w_1,f_2),(w_3,f_1)\right\rbrace =\sigma^1$, and $A(P^\mu,w_2)\cap \sigma^2 = \left\lbrace (w_2,f_2),(w_4,f_1)\right\rbrace =\sigma^2$ implying that $|A(P^\mu,w_2)|= |\sigma^1|+|\sigma^2|$. Thus, $|A(P^\mu,w_2)|> (|\sigma^1|-1)+|\sigma^2|$.\hfill $\Diamond$\end{example}

\end{document}